\def\BibTeX{{\rm B\kern-.05em{\sc i\kern-.025em b}\kern-.08em
T\kern-.1667em\lower.7ex\hbox{E}\kern-.125emX}}
\newtheorem{theorem}{\textit{Theorem}}
\newtheorem{corollary}{\textit{Corollary}}
\begin{document}

\title{Optimizing Information Freshness of IEEE 802.11ax Uplink OFDMA-Based Random Access}

\author{Jingwei Liu, Qian Wang, and He (Henry) Chen,~\IEEEmembership{Member,~IEEE,}
        % <-this % stops a space
\thanks{
The work of J. Liu is supported in part by the CUHK direct grant for research under Project 4055166.
The work of Q. Wang is supported in part by the National Natural Science Foundation of China under Grant 62401187.
The work of H. Chen is supported in part by RGC General Research Funds (GRF) under Project 14205020. The research work described in this paper was conducted in the JC STEM Lab of Advanced
Wireless
Networks for
Mission-Critical
Automation and
Intelligence funded by The Hong Kong Jockey Club Charities Trust.
% This article was presented in part at WiOpt 2023 \cite{10349810}.

J. Liu and H. Chen are with Department of Information Engineering, The Chinese University of Hong Kong, Hong Kong SAR, China 
(Email: \{lj020, he.chen\}@ie.cuhk.edu.hk).

Q. Wang is with School of Communication Engineering, Hangzhou Dianzi
University, China (email: qian.wang@hdu.edu.cn). 

% The authors thank Zhaorui Wang for his useful discussions on the network model.
}

}

% The paper headers
% \markboth{Journal of \LaTeX\ Class Files,~Vol.~14, No.~8, August~2021}%
% {Shell \MakeLowercase{\textit{et al.}}: A Sample Article Using IEEEtran.cls for IEEE Journals}

% \IEEEpubid{0000--0000/00\$00.00~\copyright~2021 IEEE}
% Remember, if you use this you must call \IEEEpubidadjcol in the second
% column for its text to clear the IEEEpubid mark.

\maketitle

\begin{abstract}
The latest WiFi standard, IEEE 802.11ax (WiFi 6), introduces a novel uplink random access mechanism called uplink orthogonal frequency division multiple access-based random access (UORA). While existing work has evaluated the performance of UORA using conventional performance metrics, such as throughput and delay, its information freshness performance has not been thoroughly investigated in the literature. This is of practical significance as WiFi 6 and beyond are expected to support real-time applications. This paper presents the first attempt to fill this gap by investigating the information freshness, quantified by the Age of Information (AoI) metric, in UORA networks. We establish an analytical framework comprising two discrete-time Markov chains (DTMCs) to characterize the transmission states of stations (STAs) in UORA networks. Building on the formulated DTMCs, we derive an analytical expression for the long-term average AoI (AAoI), facilitating the optimization of UORA parameters for enhanced AoI performance through exhaustive search. To gain deeper design insights and improve the effectiveness of UORA parameter optimization, we derive a closed-form expression for the AAoI and its approximated lower bound for a simplified
scenario characterized by a fixed backoff contention window and
generate-at-will status updates. By analyzing the approximated lower bound of the AAoI, we propose efficient UORA parameter optimization algorithms that can be realized with only a few comparisons of different possible values of the parameters to be optimized. 
% Simulation results validate our analysis and demonstrate that the AAoI achieved through the proposed parameter optimization algorithm closely approximates the optimal AoI performance obtained via exhaustive search.
% {\color{orange}Additionally, it outperforms the AAoI of the round-robin policy in dense and low-activity networks.}
Simulation results validate our analysis and demonstrate that the AAoI achieved through our proposed parameter optimization algorithm closely approximates the optimal AoI performance obtained via exhaustive search, outperforming the round-robin and max-AoI policies in large and low-traffic networks.
\end{abstract}

\begin{IEEEkeywords}
IEEE 802.11ax, uplink OFDMA-based random access, age of information, optimization.
\end{IEEEkeywords}

\section{Introduction}
\IEEEPARstart{I}{n} recent years, the deployment of IEEE 802.11ax (WiFi 6), the latest WiFi standard, has undergone a rapid expansion, marked by its increasing integration into various wireless networking applications.
% IEEE 802.11ax has revolutionarily adopted the Orthogonal Frequency Division Multiple Access (OFDMA) technology, which assigns subcarriers into multiple subsets, called resource units (RUs), enabling simultaneous data transmissions of various users.
IEEE 802.11ax has revolutionized WiFi communication by integrating the orthogonal frequency division multiple access (OFDMA) technology, enhancing spectral efficiency and network capacity 
%for high-density scenarios 
compared with the legacy IEEE 802.11 standards.
Specifically, OFDMA partitions orthogonal subcarriers into multiple wideband subsets known as resource units (RUs), each of which can carry an independent data stream, allowing simultaneous transmissions of multiple users \cite{4287203,4534773}. 
% The OFDMA-based Random Access (UORA) protocol, which is founded on OFDMA technology, represents a new mechanism for facilitating multi-user random access in the uplink as per the IEEE 802.11ax standard. 
The uplink orthogonal frequency division multiple access-based random access (UORA) protocol, which is founded on OFDMA technology, represents a new mechanism for facilitating multi-user random access in the uplink as per the IEEE 802.11ax standard.
% Prior research has investigated the performance of UORA networks using conventional metrics, including throughput and delay~\cite{10.1109/GLOCOM.2017.8254054, 7997340,7824740,6205590}.
Prior research has explored various UORA networks \cite{10.1109/GLOCOM.2017.8254054, 7997340,7824740,6205590,s19071540,9662442,9110744,10684959,yang2021utility,9040351,10068270}.
More specifically, authors of \cite{10.1109/GLOCOM.2017.8254054, 7997340,7824740,6205590,s19071540,9662442} evaluated the performance of UORA networks using conventional metrics, including throughput and delay.
Furthermore, studies in \cite{9110744,10684959,yang2021utility,9040351,10068270} proposed extended UORA mechanisms to enhance the conventional performances of the UORA networks.

% However, these works have not made efforts on assessing the information freshness performance of the UORA networks.
Thanks to the enhanced latency and capacity, WiFi 6 networks are anticipated to support time-sensitive applications, such as industrial Internet of Things, virtual reality, and smart health monitoring \cite{abdullah2015real,4550808}. 
For these applications, the timely transmission of fresh information is of paramount importance, as outdated information has little value. 
In this context, it is critically important to evaluate and optimize the information freshness performance of UORA networks. 
However, the literature indicates that traditional metrics such as delay and throughput are insufficient to characterize the freshness of information \cite{sun2022age,kosta2017age,6195689}. 
To address this gap, the age of information (AoI) metric has been introduced, providing a quantitative measure of information freshness for time-sensitive applications~\cite{kosta2017age,sun2022age}. 
Specifically, AoI is defined as the time elapsed since the generation of the most recently successfully received message at the destination. 
Significant research efforts have been dedicated to investigating and optimizing AoI in a variety of network contexts, see e.g., \cite{5984917,6195689,10.1145/3323679.3326520} and the references cited therein. 
% Nevertheless, to our best knowledge, the literature has not yet thoroughly examined the AoI of UORA networks.
Nevertheless, to the best of our knowledge, the existing literature has yet to thoroughly investigate the AoI of UORA networks, which could offer valuable design insights for optimizing UORA performance with respect to information freshness.

This paper represents the first attempt to address this gap by examining a symmetric wireless local area network that implements the UORA mechanism. 
In this network, multiple stations (STAs), each monitoring a physical process, aim to transmit their latest status updates to a common access point (AP) via a shared wireless channel that comprises multiple RUs available for random selection by the STAs. 
The arrival of status updates at the STAs is modeled as identically independent Bernoulli processes.
%A given number of RUs are avaliable allocated for the STAs to process their random access selections.
Our investigation reveals that evaluating the time-average AoI in UORA networks is a complex endeavor, primarily due to the intricate backoff mechanism employed by UORA. 
Specifically, each STA with a need to transmit uplink data must endure a randomly determined wait time within a specified backoff contention window before selecting an RU for transmission. Moreover, transmission failures can lead to an expansion of the backoff contention window range, further complicating the process. These dynamics critically affect the number of concurrent uplink transmissions, which in turn influences the rate of successful transmissions. Additionally, the interplay between the rate of successful transmissions and the packet arrival rate at the STAs impacts the overall demand for uplink transmissions within the network. This intricate interdependence among system components significantly complicates both the analysis and the optimization of AoI performance in UORA networks. To address these challenges, this paper makes two primary contributions, which are discussed as follows: 
%To the best of our acknowledgment, no comprehensive investigation of the AoI performance of the UORA networks has been implemented so far.

%The main contributions of this paper are outlined as follows.
%\begin{itemize}
Firstly, this paper formulates two discrete-time Markov chains (DTMCs) to establish an analytical framework for evaluating the long-term average AoI (AAoI) of UORA networks. 
% The first DTMC models the evolution of the number of STAs with status updates stored in their buffers, while the second DTMC represents the progression of the backoff states for each STA.
The first DTMC models the evolution of the number of STAs with status updates stored in their buffers, while the second DTMC captures the progression of the backoff states for each STA.
This allows in-depth analysis of the steady state of the considered system.
Building on this formulation, we employ a numerical approach to obtain the steady-state transmission success rate and the conditional probability that an STA gains access to the RUs after the completion of its initialized backoff waiting interval. Further analysis of the structure of the second DTMC, coupled with the obtained steady-state probabilities, enables us to derive an analytical expression for the AAoI. %More specifically, we calculate the expected time interval from the arrival of the first status update at an STA following its most recent successful transmission to its next successful transmission, as well as the expected square of this time interval, using a recursive algorithm.
Additionally, by leveraging the Bernoulli arrival of status updates, we approximate the AAoI of UORA networks. This systematic analysis facilitates the configuration of UORA network parameters to optimize AoI performance, using exhaustive search methods.
%Firstly, under a steady-state assumption, we formulate two discrete-time Markov chains (DTMCs) to be integrated into the analytical framework of the time-average AoI of the considered network. The first DTMC characterizes the evolution of the number of STAs with status updates stored in their buffers.    The second DTMC depicts how the backoff state of each STA evolves. On this basis, we can analyze the steady-state transmission success rate and the conditional probability that an STA has the right of access to the RUs after its initialized backoff waiting interval has elapsed by a numerical approach. After a thorough analysis of the structure of the second DTMC and utilizing the estimated steady-state probabilities, we can present the analytical expressions of significant components of a universal expression of the time-average AoI. More specifically, we derive the analytical expression of the expected time interval from the arrival of the first status update at an STA after its latest successful transmission to the next successful transmission of the STA, and that of the expected square of this time interval by a recursive algorithm. Additionally, based on the Bernoulli arrival process of the status update, we can approximate the AoI performance of the UORA network. Leveraging this groundwork, we can configure the UORA parameters to optimize the AoI performance through the exhaustive search.

Secondly, for deeper design insights, we analyze the AAoI of a simplified scenario characterized by a fixed backoff contention window and generate-at-will status updates. In this scenario, we derive a closed-form expression for a lower bound on the approximated AAoI. This lower bound is then used to guide the parameter optimization of UORA networks to enhance AoI performance. 
% Specifically, we transform the problem of minimizing this lower bound into an examination of the monotonicity properties of a continuous function. 
% Through comprehensive analysis, we establish the function’s monotonicity within the relevant domain, thus identifying the optimal parameters that minimize the AoI lower bound.
Specifically, we transform the problem of minimizing this lower bound into an examination of the monotonicity properties of a continuous function built upon the expression of the lower bound.
Subsequently, through comprehensive analysis, we establish the function’s
monotonicity within the relevant domain, thus identifying the optimal parameters that minimize the AoI lower bound.
Inspired by these findings, we propose an efficient UORA parameter optimization algorithm that can be realized with only a few comparisons of different possible values of the parameters to be optimized. We extend this strategy to accommodate the AoI-oriented optimization of general UORA networks with stochastic arrival of status updates. 
% {\color{blue}We note that a similar method has been adopted in \cite{10778312}, where the backoff contention window size of the IEEE 802.11 distributed coordination function (DCF) mechanism is fixed and optimized to replicate an AoI-optimal Bernoulli transmission policy.}
Simulation results validate the alignment of our analytical expression of the AAoI with simulation results obtained via Monte Carlo simulations. Furthermore, these simulations confirm the efficacy of our approximations and the robustness of the resulting parameter optimization algorithm.

\section{Related Work}
Considerable efforts have been dedicated to the study of AoI-oriented problems in ALOHA-like uplink random access networks, as evidenced by extensive literature, including \cite{9162973, 9785624, 9377549, chan2023lowpower, 8445979, 9791264, 8764468, 10138556, 10229041}. 
In \cite{9162973,9785624,9377549,8445979}, the AoI performance of slotted-ALOHA networks where sources generate packets at will was studied, while the authors in \cite{8445979,9791264,8764468,10229041} focused on that of the slotted-ALOHA networks with stochastic arrivals of status updates.
In particular, the authors in \cite{9162973,9785624,9377549} analyzed and optimized the AoI performance of the threshold-ALOHA networks, where each user transmits with a certain probability when its instantaneous AoI exceeds a predefined threshold. 
%The closed-form expression of the time-average AoI was achieved and optimized.
Moreover, the authors of \cite{10138556,chan2023lowpower} derived the analytical expression of the AoI performance of the frame slotted-ALOHA with reservation and data slots, followed by the associated optimization.
Our work offers analysis and optimization of the AoI in UORA networks involving an intricate backoff mechanism, which is not considered in ALOHA networks.

The AoI performance of another random access mechanism, carrier-sense multiple access (CSMA), has been explored in references \cite{5984917, 8493069, 10.1109/TNET.2020.2971350,10621330,10323421}. 
Particularly, in \cite{10.1109/TNET.2020.2971350}, the authors investigated the AoI performance of a CSMA network with randomly distributed parameters.
Based on the notion of stochastic hybrid systems (SHS), a closed-form expression of the network-wide average AoI was derived.
An AoI-oriented optimization problem was formulated and then converted to a convex problem, enabling efficient optimization.
Reference \cite{10621330} employed a novel SHS model incorporating the collision probability to study the AoI performance of a tagged node in a practical CSMA network including dense background nodes.
The AoI-optimal traffic arrival rate of the tagged node was analytically found on the basis of the creative model.
However, the optimization objective and the parameter to be optimized in \cite{10621330} differ from those in our work.
The authors of \cite{10323421} proposed a distributed policy called Fresh-CSMA based on the backoff mechanism of the CSMA technology to optimize AoI in single-hop wireless networks.
Fresh-CSMA was proven to match the centralized scheduling decisions of the max-weight policy, which is near-optimal, with a high probability in the same network state.
The authors also showed that Fresh-CSMA with a realistic setting performs comparably to the max-weight policy.
Nevertheless, despite the use of backoff mechanisms in both CSMA and UORA networks, the analytical framework developed for the AoI in CSMA networks cannot be directly applied to our work. This is due to the fundamental differences in the transmission processes between CSMA and UORA networks.

Optimizing the AoI performance in legacy WiFi networks has garnered increasing attention in recent literature. For instance, the authors in \cite{10000608} developed an optimization strategy based on queuing analysis of AoI in WiFi networks. Additionally, a deep learning approach for channel condition estimation aimed at reducing AoI in WiFi networks was introduced in \cite{9973486}. Furthermore, several studies, including \cite{9522228, 10228860, 10349886}, have developed and implemented AoI-based transmission schemes by adapting legacy IEEE 802.11 standards. Specifically, \cite{10228860} describes the implementation of an application layer middleware designed to tailor IEEE 802.11 networks to the requirements of time-sensitive devices. In \cite{10349886}, the authors proposed an AoI-optimized protocol stack that significantly enhances the AoI performance of WiFi devices. Nevertheless, these studies did not address the most recent updates to the IEEE 802.11 standards.

\section{System Model and Preliminaries}\label{sec:system}

We consider a basic service set (BSS) that implements the uplink orthogonal frequency division multiple access (OFDMA)-based random access (UORA) mechanism specified in the IEEE 802.11ax protocol. 
This BSS comprises an AP and $N$ time-sensitive STAs, which monitor their associated physical processes (e.g., the speed and position of mobile robots) and aim to transmit status updates timely to the AP via the uplink. 
We assume a stochastic arrival model for the generation of these status updates. 
Each STA is equipped with an integer OFDMA backoff (OBO) counter and maintains a size-1 buffer that stores only the most recently arrived status update. 
To provide a foundation for our analysis, we next provide a concise primer on the IEEE 802.11ax UORA mechanism. 
For a more comprehensive understanding, readers are referred to the 802.11ax amendment \cite{9442429}, which offers more protocol details.

\subsection{A Primer on IEEE 802.11ax UORA}
The procedure of the 802.11ax UORA transmission is shown in Fig.~\ref{UORA_TX}. Specifically, the AP regularly broadcasts a trigger frame (TF), including information about the number of resource units (RUs) allocated for random access, denoted by $L$.
% In this work, the amount of RUs $L$ is assumed to be fixed.
After receiving the TF, each STA with a status update in its buffer activates its OBO counter immediately if the OBO counter is offline. 
The OBO counter is initialized uniformly in the range of $0$ to the OFDMA contention window (OCW).
% which has an initial value $OCW_{min}$.
Subsequently, each active OBO counter is decremented by $L$ if its current count exceeds $L$; otherwise, it is reset to $0$. If the active OBO counter remains greater than $0$ after being decremented by $L$, it is maintained until the STA receives the next TF.
% In the following period of the HE (high-efficiency) trigger-based physical layer protocol data unit (PPDU), each STA with the OBO counter reaching $0$ uniformly selects one RU to transmit its status update in the buffer.
After a short interframe space (SIFS), each STA with the OBO counter reaching $0$ uniformly selects one RU to transmit its status update in the buffer within the following period of the high-efficiency (HE) trigger-based physical layer protocol data unit (PPDU).
% Those STAs with OBO counters equal to $0$, thus having random access rights, are referred to as \textit{active STAs} hereafter.
For simplicity, we consider the collision channel model. That is, if multiple STAs select the same RU to transmit, a collision occurs, causing all transmissions through that RU to fail.
Conversely, the AP can correctly receive the status update transmitted via an RU selected only by one STA.
% Then, the AP sends a multi-sTA blockACK (M-BA) to acknowledge which STAs have just successfully transmitted their status updates.
% The acknowledged status updates would be discarded from their buffers. 
% In this work, we define the duration between the beginnings of two consecutive TFs as a time slot, indexed by $t\in\{1,2,\cdots,T\}$, which is regarded as the atomic unit of time.
Next, the AP sends a Multi-STA BlockAck (M-BA) after another SIFS to acknowledge which STAs have just successfully transmitted their status updates.
The acknowledged status updates would be discarded from their buffers.
After a distributed coordination function interframe space (DIFS), the AP broadcasts the next TF for another transmission opportunity.
We define the duration between the beginnings of two consecutive TFs as a time slot, indexed by $t\in\{1,2,\cdots,T\}$.
By Fig. \ref{UORA_TX}, the length of a time slot $T_{slot}$ can be expressed as
\begin{equation*}
    T_{slot} = T_{TF}+T_{SIFS}+T_{payload}+T_{SIFS}+T_{ACK}+T_{DIFS},
\end{equation*}
where $T_{TF}$ denotes the transmission time of the TF, $T_{SIFS}$ is the duration of the SIFS, $T_{payload}$ is the transmission time of the HE trigger-based PPDU, $T_{ACK}$ is the duration of the M-BA, and $T_{DIFS}$ denotes the duration of DIFS.
In this work, we assume that the AP can perfectly maintain the length of time slots, and regard a time slot as the atomic unit of time.
The status update arrival at each STA in each time slot follows an independent and identically distributed (i.i.d.) Bernoulli process with an arrival rate of~ $\lambda$.\footnotemark{}
\footnotetext{In the continuous-time model, the packet arrival rate $\lambda$ (also known as the saturated rate) is calculated by $1-\exp(-\lambda_PT_{slot})$, where $\lambda_P$ denotes the Poisson arrival rate of the status updates in the sense of continuous time.}

\begin{figure}%[t]
	\centering    \includegraphics[width=0.48\textwidth]{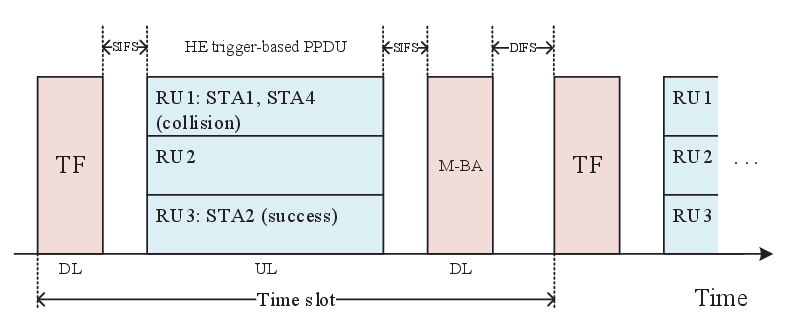}
	\caption{Transmission process in IEEE 802.11ax UORA networks, where SIFS stands for Short Interframe Space.}
	\label{UORA_TX}
 \vspace{-1em}
\end{figure}

The evolution of the OBO counter follows an exponential backoff mechanism \cite{9442429}.
Let $x$ denote the backoff level of the OBO counter, and $OCW_x$ denote the value of OCW in backoff level $x$.
When an OBO counter is activated by its STA, we have $x=0$ and $OCW_0=OCW_{min}$.
After the M-BA is broadcast, the STAs aware of their successful transmissions reset OCW to $OCW_{min}$ and take their OBO counters offline, while the STAs whose transmissions failed update OCW and initialize their OBO counters.
Specifically, each failed transmission of the STA increases the backoff level of its OBO counter by $1$ and updates its OCW by
\begin{equation}\label{OCWeolve}
    OCW_x=\min(2\times OCW_{x-1}+1,OCW_{max}).
\end{equation}
The values of $OCW_{min}$ and $OCW_{max}$ are determined by 
\begin{equation}\label{OCWmin}
    OCW_{min}=2^{EOCW_{min}}-1,
\end{equation}
\begin{equation}\label{OCWmax}
    OCW_{max}=2^{EOCW_{max}}-1,
\end{equation}
where $EOCW_{min}$ and $EOCW_{max}$ are integer parameters notified to the STAs through the beacon frame sent by the AP.
% In IEEE 802.11ax, we have the predefined range $0\le EOCW_{min}\le EOCW_{max}\le 1023$.
In IEEE 802.11ax, we have the predefined range $0\le EOCW_{min}\le EOCW_{max}\le 7$.
Clearly, we have the highest backoff level, denoted by $m$.
For notation brevity, we define
\begin{equation}
    W_x\triangleq OCW_x+1.
\end{equation}
By \eqref{OCWeolve}, \eqref{OCWmin}, and \eqref{OCWmax}, we have $W_0=OCW_{min}+1$, $W_m=OCW_{max}+1$, and $W_x=2^xW_0$.

%Hereafter, we refer to those  STAs with OBO counters equal to $0$, thus having random access rights to transmit, as the \textit{accessing STAs},  those STAs with status updates in the buffers as the \textit{active STAs}, and those STAs with empty buffers and offline OBO counters as the \textit{idle STAs}, respectively.
We now define the STAs with different statuses in UORA networks. 
STAs with OBO counters set to $0$, which grants them the right to transmit, are referred to as \textit{accessing STAs}. Those STAs that currently have status updates in their buffers are termed \textit{active STAs}. 
Lastly, STAs characterized by empty buffers and inactive OBO counters are designated as \textit{idle STAs}.

\subsection{AoI as Information Freshness Metric}
The AoI metric, initially proposed in \cite{5984917}, is adopted to assess the information freshness at the AP. 
Given that a symmetric network configuration is assumed, our analysis can be concentrated on the AoI for a typical STA. 
Moving forward, we denote the AoI at the typical STA and the system time of the most recent status update stored in the buffer of this STA in slot $t$ as $\Delta(t)$ and $\delta(t)$, respectively. 
The mathematical expression for the evolution of $\delta(t)$ is defined~as
\begin{equation}\label{LAoIevolve}
    \delta(t+1)=
    \begin{cases}
        0,& \text{if a status update arrives at the STA}\\
        & \text{in slot $t+1$,}\\
        \delta(t)+1,& \text{otherwise.}
    \end{cases}
\end{equation}
Moreover, by the definition of the average AoI, $\Delta(t)$ evolves in the form of
\begin{equation}\label{AoIevolve}
    \Delta(t+1)=
    \begin{cases}
        \delta(t)+1,& \text{if a status update of the STA is}\\
        & \text{received by the AP in slot $t$,}\\
        \Delta(t)+1,& \text{otherwise.}
    \end{cases}
\end{equation}

% The performance metric we aim to evaluate for the typical STA is the long-term average AoI (AAoI), denoted by $\overline{\Delta}$. This metric is mathematically defined as follows:
% \begin{equation}\label{AAoI}
%     \overline{\Delta}\triangleq \lim_{T\to\infty}\frac{1}{T}\sum^T_{t=1}\Delta(t).
% \end{equation}
% % Note that once the system reaches a steady state, the AoI for different STAs evolves identically due to the symmetric setup of the network.
% Due to the symmetric setup of the network, the AoI for different STAs evolves identically in the long term.
% Consequently, we can use \eqref{AAoI} to represent the AoI performance across the entire network.
The performance metric we aim to evaluate for the typical STA is the time average expected AoI (AAoI), denoted by $\overline{\Delta}$. This metric is mathematically defined as follows:
\begin{equation}\label{AAoI}
    \overline{\Delta}\triangleq \lim_{T\to\infty}\frac{1}{T}\mathbb{E}\left[\sum^T_{t=1}\Delta(t)\right].
\end{equation}
Due to the symmetric setup of the network, the steady-state distributions of AoI are the same for different STAs, implying that AAoIs of different STAs are identical.
Consequently, we can use \eqref{AAoI} to represent the AoI performance across the entire network. 
Thanks to the time-slotted feature of the considered UORA network, we can express $\overline{\Delta}$ as~\cite{8648195}
\begin{equation}\label{AoIUE}
    \overline{\Delta}=\mathbb{E}[S_{\tau-1}]+\frac{\mathbb{E}[X^2_{\tau}]}{2\mathbb{E}[X_{\tau}]}-\frac{1}{2},
\end{equation}
where $X_{\tau}$ denotes the time interval between the $(\tau-1)$th and $\tau$th receptions of status updates from the typical STA, and $S_{\tau-1}$ represents the service time of the $(\tau-1)$th received status update, which is the time interval between the generation time and the reception time of this status update. 
In the subsequent section, we will introduce an analytical framework designed to thoroughly analyze each expectation term presented in (\ref{AoIUE}).

\section{Analytical Framework}
\begin{figure}[t]
	\centering
    \includegraphics[width=0.47\textwidth]{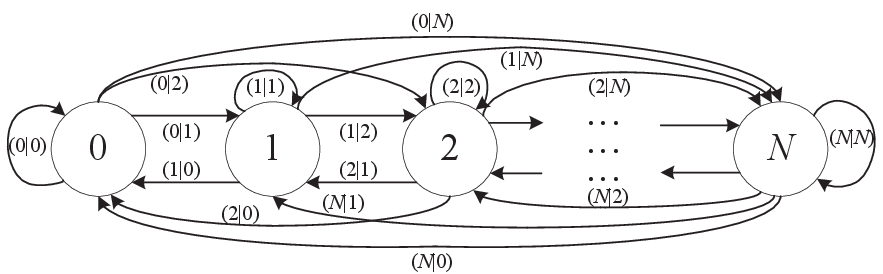}
	\caption{The DTMC of the number of active STAs, i.e., $\mathcal{M}_1$, where $(i|j)$ stands for $P_{i,j}$.}
    % \vspace{-1em}
	\label{MCN}
  \vspace{-1em}
\end{figure}
{
This section introduces an analytical framework designed to analyze the long-term AAoI in UORA networks. The primary challenge in AAoI analysis involves characterizing the UORA process and its impact on the AoI evolution of each STA, particularly with stochastic arrivals of status updates. Specifically, the OBO counter for each STA activates within the UORA process only if there is a status update in its buffer—indicating the STA is active. This activation affects the number of STAs contending for $L$ Resource Units (RUs), thereby complicating the AoI evolution across all active STAs. Moreover, status preemption during the UORA process further alters the AoI dynamics. We note that the existing frameworks for performance analysis of UORA networks, which are limited to saturation conditions and do not account for status preemption \cite{7997340,7824740}, are not suitable for our analysis.} 
%Since all STAs in the BSS share the common $L$ RUs, the states of all STAs are entangled together.

{
For the entangled AoI evolution of all STAs, it is feasible to model the instantaneous AoI evolution using a multi-dimensional discrete-time Markov chain (DTMC). However, this approach encounters significant challenges due to the curse of dimensionality, particularly as the number of STAs, $N$, increases. 
To address this issue, we adopt a critical approximation in which each status update transmission is assumed to experience a constant and independent probability of successful transmission (and correspondingly, collision probability), irrespective of the number of retransmissions. This approximation is widely employed in the literature for performance analysis of random access protocols (see e.g., \cite{840210,6205590}), and has been shown to achieve high accuracy, particularly when the number of STAs is large. 
With this approximation in place, we can construct two DTMCs to characterize the state transitions respectively: one for the number of active STAs at each time slot, and another for the state transitions experienced by each STA within the UORA process. However, these two DTMCs are not independent but are mutually coupled. Specifically, the number of active STAs affects the potential number of concurrent uplink transmissions in the UORA process. Conversely, the state transitions within the UORA process influence the distribution of the number of active STAs. This interdependence makes the analysis complex and non-trivial.} In this section, we carefully analyze the relationships among the abovementioned factors when constructing the two DTMCs. We then calculate the steady-state distribution to analyze the successful probability of each transmission attempt, and finally apply a recursive method to analyze the AAoI.
%Since the UORA process is not equivalent to the AoI evolution process, we make use of the recursive feature of the DTMC chain for the UORA process to analyze the EAoI.

%To address this challenge, in this section we first carefully analyze the relationship between these factors to construct the two DTMCs. 
%Then, we calculate the steady-state distribution to analyze the successful probability of each transmission attempt and the EAoI. Since the UORA process is not equivalent to the AoI evolution process, we make use of the recursive feature of the DTMC chain for the UORA process to analyze the EAoI.}
\subsection{DTMC Construction}%Relevant Probabilities}

Denote by $q$ and $\rho$ the successful transmission probability of an accessing STA and the stationary channel accessing probability of an active STA, respectively. Recall the definitions of active STA and accessing STA, the active STA considers the status arrival process and the status transmission outcome during the UORA process, while accessing STA characterizes the STA state in the UORA process. As such, the value of $q$ depends on the number of active STAs and is necessary for characterizing the UORA process. In this sense, we begin with constructing a DTMC $\mathcal{M}_1$ to characterize the evolution of the number of active STAs for a given $\rho$.
The structure of $\mathcal{M}_1$ is shown in Fig. \ref{MCN}.
The number of active STAs is defined as the DTMC state of $\mathcal{M}_1$, with the stationary distribution of states being denoted by $\boldsymbol{\mu}=[\mu_i]^N_{i=0}$, where $\mu_i$ denotes the steady-state probability that the BSS has $i$ active STAs. 
The state transition matrix of $\mathcal{M}_1$ is denoted by $\mathbf{P}\triangleq [P_{i,j}]_{0\le i,j\le N}$, where $P_{i,j}$ is the state transition probability that the BSS has $j$ active STAs in the current time slot under the condition that the BSS has $i$ active STAs in the last time slot, which is given by 
\begin{equation}
    P_{i,j}=\sum^{\min\{i,L\}}_{s=\max\{0,i-j\}}A^{j-i+s}_{N-i+s}(\lambda)D^s_i(\rho,L),
\end{equation}
where $A^{k}_{n}(\lambda)$ denotes the probability that $k$ out of $n$ idle STAs have new status updates arriving in one slot, given by 
% $A^{k}_{n}(\lambda)=\binom{n}{k}\lambda^k(1-\lambda)^{n-k}$,
\begin{equation}
    A^{k}_{n}(\lambda)=\binom{n}{k}\lambda^k(1-\lambda)^{n-k},
\end{equation}
% and $D^s_i(\rho,L)$ denotes the probability that $s$ out of $i$ accessing STAs have successful transmissions.
% % , and $\rho$ is the steady-state probability that an STA is accessing an RU given that the STA is active.
% Specifically, we have 
% % $D^s_i(\rho,L)=\sum^i_{j=s}C^j_i(\rho)T^s_j(L)$,
% \begin{equation}
%     D^s_i(\rho,L)=\sum^i_{j=s}C^j_i(\rho)T^s_j(L),
% \end{equation}
% where $C^j_i(\rho)$ represents the probability that $j$ out of $i$ active STAs are accessing STAs, given by $C^j_i(\rho)=\binom{i}{j}\rho^j(1-\rho)^{i-j}$,
% % \begin{equation}
% %     C^j_i(\rho)=\binom{i}{j}\rho^j(1-\rho)^{i-j},
% % \end{equation}
% and $T^s_j(L)$ denotes the probability that $s$ out of $j$ accessing STAs transmit successfully.
% $T^s_j(L)$ is equivalent to the probability of randomly assigning $j$ objects to $L$ units where $s$ units are assigned a single object, which has been studied in \cite{1095866}.
% In reference to \cite[Eq. 6]{1095866}, we have
% \begin{equation}
%     T^s_j(L)=\frac{(-1)^sL!j!}{L^js!}\sum^{\min\{L,j\}}_{n=s}\frac{(-1)^n(L-n)^{j-n}}{(n-s)!(L-n)!(j-n)!}.
% \end{equation}
and $D^s_i(\rho,L)$ denotes the probability that $s$ out of $i$ active STAs have successful transmissions.
Specifically, we have 
% $D^s_i(\rho,L)=\sum^i_{j=s}C^j_i(\rho)T^s_j(L)$,
\begin{equation}
    D^s_i(\rho,L)=\sum^i_{g=s}C^g_i(\rho)T^s_g(L),
\end{equation}
where $C^g_i(\rho)$ represents the probability that $g$ out of $i$ active STAs are accessing STAs, given by $C^g_i(\rho)=\binom{i}{g}\rho^g(1-\rho)^{i-g}$,
% \begin{equation}
%     C^j_i(\rho)=\binom{i}{j}\rho^j(1-\rho)^{i-j},
% \end{equation}
and $T^s_g(L)$ denotes the probability that $s$ out of $g$ accessing STAs transmit successfully.
% $T^s_g(L)$ is equivalent to the probability of randomly assigning $g$ objects to $L$ units where $s$ units are assigned a single object, which has been studied in \cite{1095866}.
More specifically, $T^s_g(L)$ is the probability that each of $j$ accessing STAs uniformly selects one of $L$ RUs, and precisely $s$ of those $L$ RUs are chosen by only one STA, leading to successful transmissions on those RUs.
Obviously, this probability is equivalent to the probability of randomly assigning $g$ objects to $L$ units where $s$ units are assigned a single object, which has been studied in \cite{1095866}.
In reference to \cite[Eq. 6]{1095866}, we have
\begin{equation}
    T^s_g(L)=\frac{(-1)^sL!g!}{L^gs!}\sum^{\min\{L,g\}}_{h=s}\frac{(-1)^h(L-h)^{g-h}}{(h-s)!(L-h)!(g-h)!}.
\end{equation}
% Then, according to the balance equations of the DTMC \cite{norris1998markov}, we have 
% \begin{equation}\label{ls}
%     \begin{bmatrix}
%         \boldsymbol{\mu}^T\\
%         1
%     \end{bmatrix}=
%     \begin{bmatrix}
%         \mathbf{P}^T\\
%         \mathbf{1}
%     \end{bmatrix}\boldsymbol{\mu}^T,
% \end{equation}
% which can be manipulated into a linear system, and can easily derive
% % the following proposition by Bayes theorem.
% % \begin{proposition}
% % Given $\boldsymbol{\mu}$, $\rho$, $N$, and $L$, the transmission success rate of an STA is given by
% \begin{equation}\label{TSR}
%     q=\sum^{N-1}_{a=0}\frac{(a+1)\mu_{a+1}}{\sum^{N-1}_{i=0}(i+1)\mu_{i+1}}\sum^a_{b=0}C^b_a(\rho)(1-\frac{1}{L})^b
%     \end{equation}
% % \end{proposition}
% by Bayes theorem and the law of total probability.
Then, according to the balance equations of the DTMC \cite{norris1998markov}, we have 
\begin{equation}\label{ls}
    \begin{bmatrix}
        \boldsymbol{\mu}^T\\
        1
    \end{bmatrix}=
    \begin{bmatrix}
        \mathbf{P}^T\\
        \mathbf{1}
    \end{bmatrix}\boldsymbol{\mu}^T,
\end{equation}
which can be manipulated into a linear system.
Additionally, we can derive the expression of $q$, given in the following theorem.
\begin{theorem}\label{T0}
    Given $\boldsymbol{\mu}$, $\rho$, $N$, and $L$, the transmission success rate of an STA is expressed as
    \begin{equation}\label{TSR}
        q=\sum^{N-1}_{a=0}\frac{(a+1)\mu_{a+1}}{\sum^{N-1}_{i=0}(i+1)\mu_{i+1}}\sum^a_{b=0}C^b_a(\rho)(1-\frac{1}{L})^b.
    \end{equation}
\end{theorem}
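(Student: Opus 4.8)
The plan is to compute $q$ by a tagged-STA (Palm) argument: fix one STA, condition on the event that it is an accessing STA in a generic steady-state slot, and evaluate the conditional probability that its chosen RU is collision-free. Formula \eqref{TSR} should emerge once we identify the conditional law of the number of \emph{other} active STAs and push through the independence afforded by the constant-and-independent success-probability approximation already adopted in the paper.

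First I would set up the conditioning. By symmetry all STAs are statistically identical, so it suffices to analyze a single tagged STA; write $\mathcal{A}$ for the event ``the tagged STA is accessing''. Decomposing over the total number of active STAs and using (i) the exchangeability identity $\Pr(\text{tagged STA active}\mid i\text{ active STAs})=i/N$ and (ii) the modeling assumption that an active STA is accessing with probability $\rho$ independently of the others, I would obtain
\[
\Pr(\mathcal{A},\ a\text{ other active STAs})=\mu_{a+1}\,\frac{a+1}{N}\,\rho,\qquad a=0,1,\dots,N-1,
\]
and hence $\Pr(\mathcal{A})=\frac{\rho}{N}\sum_{i=0}^{N-1}(i+1)\mu_{i+1}$. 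Dividing, the conditional distribution of the number of other active STAs given $\mathcal{A}$ is the \emph{size-biased} distribution $\Pr(a\text{ other active}\mid\mathcal{A})=\frac{(a+1)\mu_{a+1}}{\sum_{i=0}^{N-1}(i+1)\mu_{i+1}}$, which is precisely the outer weighting in \eqref{TSR}.

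Next I would evaluate the inner sum. Conditioned on there being $a$ other active STAs, each is an accessing STA independently with probability $\rho$, so the number $b$ of other accessing STAs is $\mathrm{Binomial}(a,\rho)$ with mass $C^b_a(\rho)$; and conditioned on $b$ other accessing STAs, each of them selects one of the $L$ RUs uniformly and independently of the tagged STA's choice, so the probability that none of them hits the tagged STA's RU is $(1-1/L)^b$. Summing over $b$ and then over $a$ against the size-biased weights yields \eqref{TSR}. As a consistency check I would collapse the inner sum by the binomial theorem to $(1-\rho/L)^a$ and verify that the same $q$ follows from the alternative route of dividing the steady-state expected number of successful transmissions, $\sum_i \mu_i\, i\rho(1-\rho/L)^{i-1}$, by the expected number of accessing STAs, $\sum_i \mu_i\, i\rho$, using that the expected number of singleton-occupied RUs among $g$ uniform selections equals $\sum_s s\,T^s_g(L)=g(1-1/L)^{g-1}$.

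The main obstacle I anticipate is not algebraic but conceptual: getting the conditioning right so that the size-biasing factor $(a+1)\mu_{a+1}$ appears rather than the naive $\mu_a$ or $\mu_{a+1}$. The point is that an STA singled out because it is accessing is necessarily active, and an STA is more likely to be the tagged one precisely in slots with many active STAs; this is exactly what $\Pr(\text{tagged active}\mid i\text{ active})=i/N$ encodes. A secondary point requiring care is that the ``independent-$\rho$'' and ``uniform RU choice'' factorizations used above are exact only within the constant-and-independent success-probability approximation, so I would state explicitly that Theorem~\ref{T0} is asserted under that approximation, which legitimizes the binomial steps.
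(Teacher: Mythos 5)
Your proposal is correct and follows essentially the same route as the paper's proof: the size-biased weighting $\Pr\{a \text{ other active}\mid \text{accessing}\} = (a+1)\mu_{a+1}/\sum_i (i+1)\mu_{i+1}$ is exactly the paper's Bayes/total-probability decomposition (its $\Pr\{Y_a, Y\} = \rho\mu_{a+1}\binom{N-1}{a}/\binom{N}{a+1} = \rho(a+1)\mu_{a+1}/N$ is your exchangeability identity), and the inner sum $\sum_b C^b_a(\rho)(1-1/L)^b$ is derived identically. The binomial-theorem consistency check is a nice extra but does not change the argument.
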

\begin{proof}
    See Appendix \ref{appA0}.
\end{proof}

To solve \eqref{ls} and calculate \eqref{TSR}, we delve into the investigation of the crucial probability $\rho$.
To that end, we construct another DTMC $\mathcal{M}_2$ to characterize the evolution of the state of the OBO counter of an STA.
Based on the backoff mechanism, we define two categories of states for $\mathcal{M}_2$.
The first category of the state is denoted by $\left\langle0\right\rangle$, which represents the STA is idle, and thus the OBO counter of the STA is offline.
The second category of states is denoted by bi-dimensional tuples $\left\langle x,y \right\rangle$, where $x$ defined in Section~\ref{sec:system} denotes the backoff level of the OBO counter, $y$ denotes the count of the OBO counter after the STA updates the counter at the beginning of a slot.
Recall that the OBO count is decremented by $L$ immediately after the counter is initialized.
Hence, given $x$, we have $y\in\{0,1,\cdots,\max(0,W_x-L-1)\}$.
The structure of $\mathcal{M}_2$ is depicted in Fig. \ref{MC}, where $\alpha_x\triangleq\lfloor\frac{W_x-1}{L}\rfloor$ denotes the quotient, and $\beta_x=W_x-1-\alpha_xL$ is the remainder of $\frac{W_x-1}{L}$.

\begin{figure}[t]
	\centering
    \includegraphics[width=0.45\textwidth]{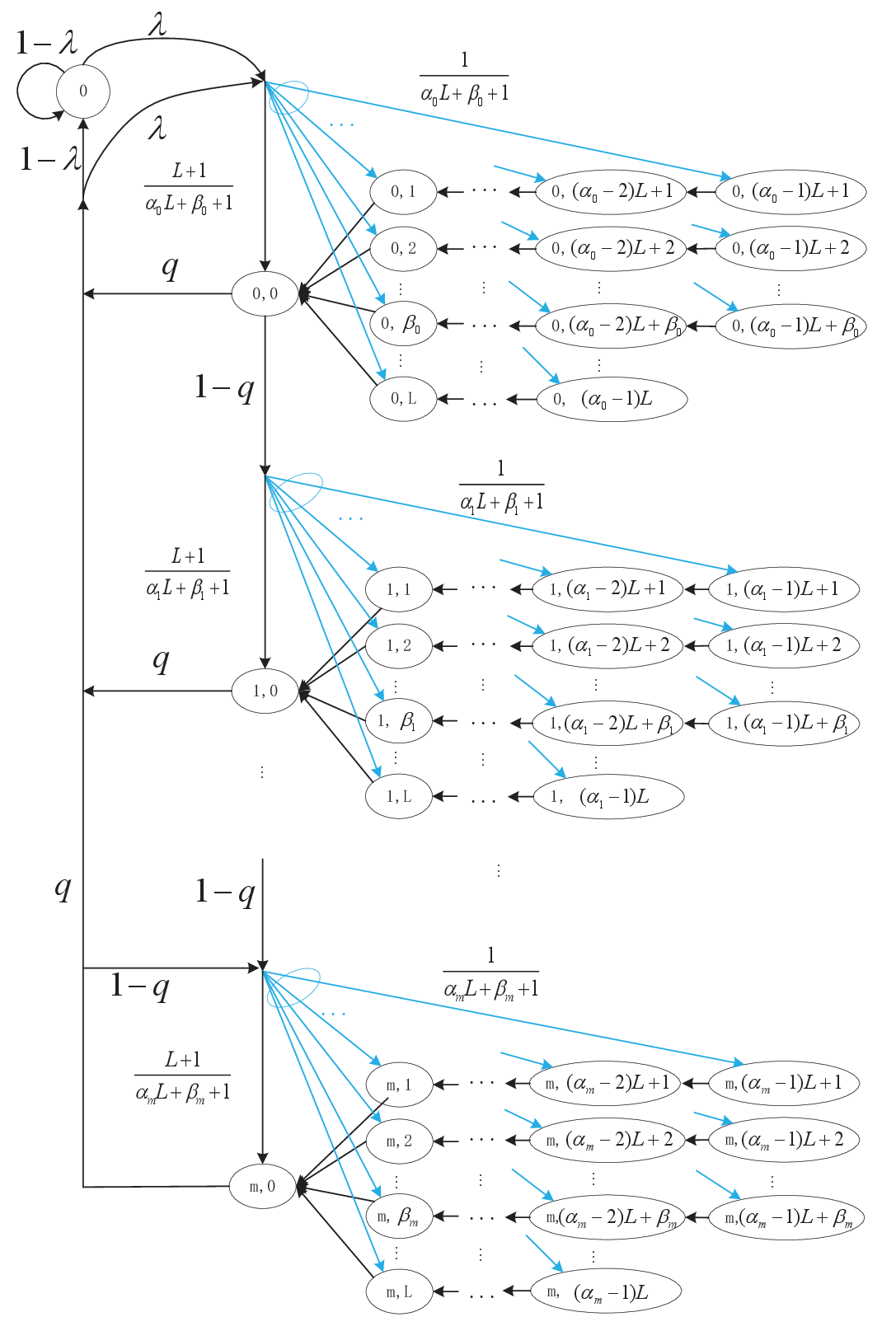}
	\caption{The DTMC of the state of the OBO counter, i.e., $\mathcal{M}_2$.}
	\label{MC}
 \vspace{-1em}
\end{figure}

We denote the steady-state probabilities of states $\left\langle0\right\rangle$ and $\left\langle x,y \right\rangle$ by $\pi_0$ and $\pi_{x,y}$, respectively.

By Fig. \ref{MC}, we have 
\begin{equation}\label{rho}
    \rho=\frac{\sum^m_{x=0}\pi_{x,0}}{\sum^m_{x=0}\sum^{W_x-L-1}_{y=0}\pi_{x,y}}.
\end{equation}
Clearly, $\pi_{x,0}=(1-q)\pi_{x-1,0}$ for $0<x<m$, and $\pi_{m,0}=(1-q)\pi_{m-1,0}+(1-q)\pi_{m,0}=\frac{(1-q)^m}{q}\pi_{0,0}$.
As such, we can obtain $\sum^m_{x=0}\pi_{x,0}=\frac{\pi_{0,0}}{q}$.
% \begin{equation}\label{pia0}
%     \sum^m_{a=0}\pi_{a,0}=\frac{\pi_{0,0}}{q}.
% \end{equation}

% We remark that a similar DTMC, representing a special case of $\mathcal{M}_2$, was employed in \cite{10.1109/GLOCOM.2017.8254054}.
% In the DTMC of \cite{10.1109/GLOCOM.2017.8254054}, the transitions between state $\left\langle x,y \right\rangle$ are equivalent to that in $\mathcal{M}_2$.
We remark that a similar DTMC, representing a special case of $\mathcal{M}_2$ with $\lambda=1$, was employed in \cite{10.1109/GLOCOM.2017.8254054}.
In the DTMC of \cite{10.1109/GLOCOM.2017.8254054}, the transitions between state $\left\langle x,y \right\rangle$ are equivalent to that in $\mathcal{M}_2$ since they are not affected by $\lambda$.
By applying the results in \cite[Eqs. 7, 9]{10.1109/GLOCOM.2017.8254054}, we have
\begin{equation}\label{sumpiab}
\begin{split}
    &\sum^m_{x=0}\sum^{W_x-L-1}_{y=0}\pi_{x,y}\\
    &=
    \begin{cases}
        \pi_{0,0}\frac{q\sum^{m-1}_{x=0}H_x(\frac{1-q}{2})^x+H_m(\frac{1-q}{2})^m+W_0}{W_0q}, &\text{if }m>0,\\
        \pi_{0,0}\frac{H_0+W_0}{W_0}, &\text{otherwise},
    \end{cases}
\end{split}
\end{equation}
where 
% $H_x=-\frac{L}{2}\left\lfloor\frac{W_x-1}{L}\right\rfloor^2+\left(W_x-1-\frac{L}{2}\right)\left\lfloor\frac{W_x-1}{L}\right\rfloor$.
\begin{equation}
    H_x=-\frac{L}{2}\left\lfloor\frac{W_x-1}{L}\right\rfloor^2+\left(W_x-1-\frac{L}{2}\right)\left\lfloor\frac{W_x-1}{L}\right\rfloor.
\end{equation}
Substituting \eqref{sumpiab} together with $\sum^m_{x=0}\pi_{x,0}=\frac{\pi_{0,0}}{q}$ into \eqref{rho} yields 
% the expression of $\rho$.
\begin{equation}\label{CAR}
    \rho=\frac{W_0}{q\sum^{m-1}_{x=0}H_x(\frac{1-q}{2})^x+H_m(\frac{1-q}{2})^m+W_0}
\end{equation}
when $m>0$, and $\rho=\frac{W_0}{H_0+W_0}$ when $m=0$.
With \eqref{ls}, \eqref{TSR}, and \eqref{rho}, we can calculate the values of $q$ and $\rho$ by utilizing numerical approaches.

\subsection{AAoI Analysis}\label{EAoIAnaylsis}%Time Intervals}
With the previous analysis on the constructed DTMCs, we are ready to analyze each expected term (i.e., $\mathbb{E}[X_{\tau}]$, $\mathbb{E}[X^2_{\tau}]$, and $\mathbb{E}[S_{\tau-1}]$) in the expression of AAoI  $\overline{\Delta}$ given in~\eqref{AoIUE}.
%Based on the former analysis, we focus on analyzing the expected terms: $\mathbb{E}[X_{\tau}]$, $\mathbb{E}[X^2_{\tau}]$, and $\mathbb{E}[S_{\tau-1}]$ to derive $\overline{\Delta}$ according to~\eqref{AoIUE}.

% According to stochastic arrival nature of status update, we have $X_{\tau}=V_{\tau}+K_{\tau}$, where $V_{\tau}$ denotes the waiting time of the $\tau$th transmission of an STA, i.e., the elapsed time slots since the $(\tau-1)$th packet reception of the STA until the next status update arrival at the STA, and $K_{\tau}$ is the time interval from the arrival of the next status update at the STA to the $\tau$th successful transmission of the STA.
\begin{figure}[t]
	\centering    \includegraphics[width=0.48\textwidth]{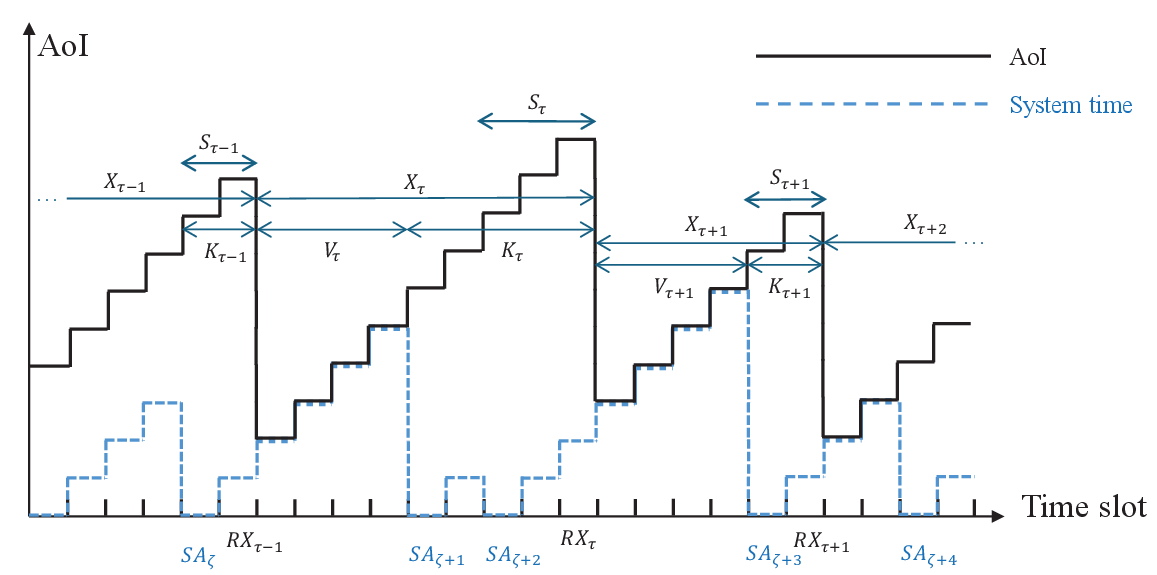}
	\caption{The AoI evolution of an STA in the considered model.}
	\label{AoI_Evolution_UORA}
 \vspace{-1em}
\end{figure}
To intuitively present the processes within $S_{\tau-1}$ and $X_{\tau}$, we depict the AoI evolution of an STA in the considered time-slotted model in Fig. \ref{AoI_Evolution_UORA}, where $RX_{\tau}$ denotes the time slot of the $\tau$th reception of the status update packet at the AP, and $SA_{\zeta}$ denotes the time slot of the $\zeta$th status update arrival at the STA. 
As shown in Fig. \ref{AoI_Evolution_UORA}, according to the stochastic arrival nature of status update at the STA, we have $X_{\tau}=V_{\tau}+K_{\tau}$.
Specifically, $V_{\tau}$ denotes the waiting time of
the $\tau$th reception of the status update packet at the AP, representing the elapsed time slots after the $(\tau-1)$th packet reception at the AP until the first new status update arrival at the STA, i.e, 
\begin{equation}
    V_{\tau}\triangleq\min_{SA_{\zeta'}>RX_{\tau-1}}\{SA_{\zeta'}\}-RX_{\tau-1}-1.
\end{equation}
Moreover, $K_{\tau}$ denotes the time interval from the first new arrival of a status update at the STA after the $(\tau-1)$th packet reception at the AP till the $\tau$th packet reception at the AP, mathematically,
\begin{equation}
    K_{\tau}\triangleq RX_{\tau}-\min_{SA_{\zeta'}>RX_{\tau-1}}\{SA_{\zeta'}\}+1.
\end{equation}
Obviously, $V_{\tau}$, $K_{\tau}$, and $S_{\tau}$ are independent of each other and independent across $\tau$.
We hence omit $\tau$ for brevity hereafter unless specified, and naturally obtain
\begin{equation}
    \mathbb{E}[X]=\mathbb{E}[V]+\mathbb{E}[K],
\end{equation}
\begin{equation}
    \mathbb{E}[X^2]=\mathbb{E}[V^2]+\mathbb{E}[K^2]+2\mathbb{E}[V]\mathbb{E}[K].
\end{equation}
Note that $V$ only depends on the Bernoulli arrival process of the status update, implying that $V$ follows the geometric distribution.
In light of this, we have
\begin{equation}
    \mathbb{E}[V]=\sum^\infty_{t=1}t\lambda(1-\lambda)^{t-1}=\frac{1}{\lambda}-1,
\end{equation}
\begin{equation}
    \mathbb{E}[V^2]=\sum^\infty_{t=1}t^2\lambda(1-\lambda)^{t-1}=\frac{\lambda^2-3\lambda+2}{\lambda^2}-1.
\end{equation}
To proceed towards the derivation on $\mathbb{E}[K]$ and $\mathbb{E}[K^2]$, we let $R_x$ denote a variable representing the consecutive time slots elapsed from the initial entry of the OBO counter of an STA into level $x$ until the STA successfully transmits a packet.
Based on the structure of $\mathcal{M}_2$ shown in Fig. \ref{MC}, we achieve the following theorem.
\begin{theorem}\label{T1}
    Given $m$, $L$, $W_x$, and $q$, for $x=m$, we have
    % \begin{equation}\label{ERmCF}
    %     \mathbb{E}[R_m]=\frac{1}{q}\left[\frac{\alpha_m(\alpha_m+1)}{2}\frac{L}{W_m}+\frac{(\alpha_m+1)\beta_m+1}{W_m}\right],
    % \end{equation}
    \begin{equation}\label{ERmCF}
        \mathbb{E}[R_m]=\frac{1}{q}\mathbb{E}[U_m],
    \end{equation}
    % \begin{equation}\label{ERmSqCF}
    % \begin{split}
    %     \mathbb{E}[R^2_m]& \!=\!\frac{1}{q}\!\left[\frac{\alpha_m(\alpha_m\!+\!1)(2\alpha_m \!+\!1)}{6}\frac{L}{W_m}\!+\!\frac{(\alpha_m\!+\!1)^2\beta_m\!+\!1}   {W_m}\right.\\
    %     &\left.+\frac{2(1-q)}{q}\!\left[\frac{\alpha_m(\alpha_m\!+\!1)}{2}\frac{L}{W_m}\!+\!\frac{(\alpha_m\!+\!1)\beta_m\!+\!1}{W_m}\right]^2\right],
    % \end{split}
    % \end{equation}
    \begin{equation}\label{ERmSqCF}
        \mathbb{E}[R^2_m]=\frac{1}{q}\mathbb{E}[U_m^2]+\frac{2(1-q)}{q^2}\mathbb{E}[U_m]^2,
    \end{equation}
    and, for $0\le x<m$, we have
    % \begin{equation}\label{ERaCF}
    % \begin{split}
    %     &\mathbb{E}[R_x]=
    %     \frac{\alpha_x(\alpha_x+1)}{2}\frac{L}{W_x}+\frac{(\alpha_x+1)\beta_x+1}{W_x}+(1-q)\mathbb{E}[R_{x+1}],
    % \end{split}
    % \end{equation}
    \begin{equation}\label{ERaCF}
    \begin{split}
        &\mathbb{E}[R_x]=
        \mathbb{E}[U_x]+(1-q)\mathbb{E}[R_{x+1}],
    \end{split}
    \end{equation}
    % \begin{equation}\label{ERaSqCF}
    %     \begin{split}
    %         \mathbb{E}[R_x^2]&=\frac{\alpha_x(\alpha_x+1)(2\alpha_x+1)}{6}\frac{L}{W_x}+\frac{(\alpha_x+1)^2\beta_x+1}{W_x}\\
    %         &+2(1-q)\mathbb{E}[R_{x+1}]\left[\frac{\alpha_x(\alpha_x+1)}{2}\frac{L}{W_x}\!+\!\frac{(\alpha_x+1)\beta_x+1}{W_x}\right]\\
    %         &+(1-q)\mathbb{E}[R_{x+1}^2].
    %     \end{split}
    % \end{equation}
    \begin{equation}\label{ERaSqCF}
        \begin{split}
            \mathbb{E}[R_x^2]&=\mathbb{E}[U_x^2]
            +2(1-q)\mathbb{E}[R_{x+1}]\mathbb{E}[U_x]
            +(1-q)\mathbb{E}[R_{x+1}^2],
        \end{split}
    \end{equation}
    where $U_x$ denotes the waiting duration until the next transmission attempt after $x$th consecutive failed transmission (i.e., $x$th consecutive backoffs) and the entry into backoff level $x$.
    More specifically,
    \begin{equation}
        \mathbb{E}[U_x]=\frac{\alpha_x(\alpha_x+1)}{2}\frac{L}{\alpha_xL+\beta_x+1}+\frac{(\alpha_x+1)\beta_x+1}{\alpha_xL+\beta_x+1},
    \end{equation}
    and
    \begin{equation}
        \mathbb{E}[U_x^2]=\frac{\alpha_x(\alpha_x+1)(2\alpha_x+1)}{6}\frac{L}{W_x}+\frac{(\alpha_x+1)^2\beta_x+1}{W_x}.
    \end{equation}
\end{theorem}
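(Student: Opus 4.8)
The plan is to exploit the block-recursive structure of $\mathcal{M}_2$ together with the constant-probability approximation underlying its construction. Observe that once the OBO counter of the STA (re-)enters backoff level $x$, the STA first spends a random number $U_x$ of slots decrementing its freshly drawn count before it makes a transmission attempt; by the approximation, this attempt then succeeds with probability $q$ independently of $U_x$ and of the entire past. If it succeeds, $R_x$ terminates after exactly $U_x$ slots. If it fails, then either $x<m$ and the STA enters level $x+1$, whereupon the residual time is a fresh, independent copy of $R_{x+1}$; or $x=m$ and the STA re-enters level $m$ through the self-loop of Fig.~\ref{MC}, whereupon the residual time is a fresh, independent copy of $R_m$. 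This gives the distributional identities
\begin{align*}
    R_x &\;\overset{d}{=}\; U_x + \mathbbm{1}\{\text{attempt fails}\}\,\widetilde{R}_{x+1}, \qquad 0\le x<m,\\
    R_m &\;\overset{d}{=}\; U_m + \mathbbm{1}\{\text{attempt fails}\}\,\widetilde{R}_{m},
\end{align*}
where $\mathbbm{1}\{\text{attempt fails}\}$ is Bernoulli$(1-q)$ and $\widetilde{R}_{x+1}$ (resp.\ $\widetilde{R}_m$) is independent of $U_x$ and of the attempt outcome.

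Taking first and second moments of these identities then yields the recursions. For $0\le x<m$, linearity and the stated independence give $\mathbb{E}[R_x]=\mathbb{E}[U_x]+(1-q)\mathbb{E}[R_{x+1}]$ directly, which is \eqref{ERaCF}; squaring and using $\mathbbm{1}\{\cdot\}^2=\mathbbm{1}\{\cdot\}$, the cross term contributes $2(1-q)\mathbb{E}[U_x]\mathbb{E}[R_{x+1}]$ and the squared term $(1-q)\mathbb{E}[R_{x+1}^2]$, giving \eqref{ERaSqCF}. For $x=m$ the same computation produces $\mathbb{E}[R_m]=\mathbb{E}[U_m]+(1-q)\mathbb{E}[R_m]$ and $\mathbb{E}[R_m^2]=\mathbb{E}[U_m^2]+2(1-q)\mathbb{E}[U_m]\mathbb{E}[R_m]+(1-q)\mathbb{E}[R_m^2]$; since $R_m$ appears on both sides, I solve the first relation to obtain $\mathbb{E}[R_m]=\mathbb{E}[U_m]/q$ (i.e.\ \eqref{ERmCF}), substitute it into the second, and solve for $\mathbb{E}[R_m^2]$ to obtain \eqref{ERmSqCF}.

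It remains to identify the law of $U_x$ and evaluate its first two moments. On entering level $x$ the count is drawn uniformly from $\{0,1,\dots,W_x-1\}$ and is then reduced by $L$ (clamped at $0$) at the start of each subsequent slot until it hits $0$, at which point the attempt is made; hence a drawn value $v$ produces $U_x=\max\{1,\lceil v/L\rceil\}$. Counting preimages, $U_x=1$ on the $L+1$ values $v\in\{0,\dots,L\}$, $U_x=j$ on the $L$ values $v\in\{(j-1)L+1,\dots,jL\}$ for $2\le j\le\alpha_x$, and $U_x=\alpha_x+1$ on the remaining $\beta_x$ values (with the obvious degeneracies when $\beta_x=0$ or $W_x\le L$), which indeed partitions the $W_x=\alpha_xL+\beta_x+1$ equiprobable outcomes. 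Therefore $\mathbb{E}[U_x]=\frac{1}{W_x}\big[(L+1)+L\sum_{j=2}^{\alpha_x}j+(\alpha_x+1)\beta_x\big]$ and $\mathbb{E}[U_x^2]=\frac{1}{W_x}\big[(L+1)+L\sum_{j=2}^{\alpha_x}j^2+(\alpha_x+1)^2\beta_x\big]$; evaluating $\sum_{j=2}^{\alpha_x}j$ and $\sum_{j=2}^{\alpha_x}j^2$ by the standard closed forms and simplifying with $W_x=\alpha_xL+\beta_x+1$ gives the two displayed expressions for $\mathbb{E}[U_x]$ and $\mathbb{E}[U_x^2]$.

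The main obstacle I expect is the careful justification of the renewal/independence structure in the first step: specifically, that under the memoryless approximation the residual process following a failed attempt at level $x$ is a statistically independent copy of $R_{x+1}$ (or of $R_m$ at the top level) and is independent of both $U_x$ and the Bernoulli failure indicator, and that the self-loop at level $m$ is handled by letting $R_m$ recur on both sides of its own equation rather than truncating it. The combinatorics behind the distribution of $U_x$ — in particular the off-by-one that maps the $L+1$ smallest initial counts (not $L$) to $U_x=1$, and the boundary cases $\beta_x=0$ and $W_x\le L$ — is routine but must be tracked consistently for the closed forms to come out exactly as stated.
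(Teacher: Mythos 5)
Your proposal is correct and follows essentially the same route as the paper's proof: a first-step (renewal) decomposition $R_x = U_x + \mathbbm{1}\{\text{fail}\}\,R_{x+1}$ (with the self-referential equation at level $m$ solved for $\mathbb{E}[R_m]$ and $\mathbb{E}[R_m^2]$), combined with the same enumeration of the uniformly drawn initial counter values into the $L+1$ values giving $U_x=1$, the $L$ values giving each $U_x=j$ for $2\le j\le\alpha_x$, and the $\beta_x$ values giving $U_x=\alpha_x+1$. The paper phrases the recursion by conditioning on success/failure rather than via a distributional identity, but the independence assumptions invoked and the resulting algebra are identical.
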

\begin{proof}
    See 
    Appendix \ref{appA}.
\end{proof}
Built upon \textit{Theorem} \ref{T1}, we can obtain $\mathbb{E}[K]$ and $\mathbb{E}[K^2]$ by $\mathbb{E}[K]=\mathbb{E}[R_0]$ and $\mathbb{E}[K^2]=\mathbb{E}[R^2_0]$, respectively.

Following this, our focus shifts to compute $\mathbb{E}[S]$.
% Drawing upon the definition of the service time, the value of $S_{\tau}$ depends on $X_{\tau}$, indicating that we need to investigate the probability distribution of $X$.
% However, we find that the corresponding distribution analysis is intractable due to the backoff mechanism of the OBO counter.
The service time of the $\tau$th received status at the AP, denoted as $S_{\tau}$, is influenced by the state of the OBO counter at the time of status arrival. If the OBO counter is active, its current state will affect the distribution of $S_{\tau}$. Additionally, it is crucial to ensure that no new status arrivals occur during this process. A further concern is that the model $\mathcal{M}_2$ for the UORA process does not account for status preemption when a new status arrives. This omission renders the recursive methods for calculating $\mathbb{E}[K]$ and $\mathbb{E}[K^2]$ inapplicable. To simplify the computation of $\mathbb{E}[S]$, we use the steady state of $\mathcal{M}_2$. We assume that each active STA accesses RUs in every time slot following a Bernoulli process with probability $\rho$. This assumption is logical, as it maintains that the conditional probability of an active STA becoming an accessing STA is $\rho$ in each slot once the system reaches the steady state. 
% Under this assumption, 
% the expression for $\mathbb{E}[S]$ can be derived in the same manner as \cite[Eq. 13]{10138556}, and is expressed as
% \begin{equation}
%     \mathbb{E}[S]=\sum^\infty_{i=1}\frac{i\Pr\{s=i\}}{\sum^{\infty}_{i=1}\Pr\{s=i\}},
% \end{equation}
% where $\Pr\{s=i\}$ denotes the probability that a status update is transmitted successfully after $i$ slots since its arrival without preemption.
% By the assumption, we have $\Pr\{s=i\}=(1-\lambda)^{i-1}(1-\rho q)^{i-1}\rho q$, where $(1-\lambda)^{i-1}$ is the probability that no preemption occurs before the reception of the current status update, and $(1-\rho q)^{i-1}\rho q$ represents the probability that the current status update is received by the AP in the $i$th slot after the status update's arrival.
Under this assumption, the expression for $\mathbb{E}[S]$ can be derived in the same manner as \cite[Eq. 13]{10138556}, and is expressed as
\begin{equation}
    \mathbb{E}[S]=\sum^\infty_{I=1}\frac{I\Pr\{T_S=I\}}{\sum^{\infty}_{I=1}\Pr\{T_S=I\}},
\end{equation}
where $\Pr\{T_S=I\}$ denotes the probability that a status update is transmitted successfully after $I$ slots since its arrival without preemption.
By the assumption, we have $\Pr\{T_S=I\}=(1-\lambda)^{I-1}(1-\rho q)^{I-1}\rho q$, where $(1-\lambda)^{I-1}$ is the probability that no preemption occurs before the reception of the current status update, and $(1-\rho q)^{I-1}\rho q$ represents the probability that the current status update is received by the AP in the $I$th slot after the status update's arrival.
After some manipulations, we have
\begin{equation}\label{approS}
    \mathbb{E}[S]=\frac{1}{\lambda(1-\rho q)+\rho q}.
\end{equation}
Consequently, with the obtained expressions for $\mathbb{E}[K]$, $\mathbb{E}[K^2]$, $\mathbb{E}[V]$, $\mathbb{E}[V^2]$, and $\mathbb{E}[S]$, we can attain an analytical expression of $\overline{\Delta}$ by substituting these expressions into \eqref{AoIUE}.
{Furthermore, according to the operation principles of the UORA mechanism, we can arrive at the following corollary.
\begin{corollary}\label{AoINoChange}
    Given the number of STAs $N$, the number of available RUs $L$, and the arrival rate of status updates $\lambda$, the AAoI $\overline{\Delta}$ has the same expression when $W_m\le L+1$.
\end{corollary}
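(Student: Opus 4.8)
The plan is to show that the hypothesis $W_m \le L+1$ forces every quantity entering the AAoI formula \eqref{AoIUE} to depend on the backoff configuration only through a pair of values that are themselves pinned down by $N$, $L$, and $\lambda$; consequently $\overline{\Delta}$ becomes a function of $N$, $L$, $\lambda$ alone, independent of the particular choice of $W_0$ and $m$ (equivalently $EOCW_{min}$ and $EOCW_{max}$) satisfying the hypothesis. The operational intuition guiding every step is that $W_x = 2^x W_0$ is nondecreasing in $x$, so $W_m \le L+1$ implies $W_x - 1 \le L$ for all $0 \le x \le m$; hence an OBO counter initialized uniformly in $\{0,\dots,W_x-1\}$ and then decremented by $L$ is reset to $0$ already in the first slot after activation, so an active STA always becomes an accessing STA immediately, at every backoff level.

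First I would verify this at the level of the closed forms. From $W_x - 1 \le L$ we get $\lfloor (W_x-1)/L\rfloor \in \{0,1\}$: when $W_x - 1 < L$ the floor is $0$ and $H_x = 0$ trivially, while in the boundary case $W_x - 1 = L$ the floor is $1$ and the coefficient $W_x - 1 - L/2 = L/2$ exactly cancels the $-L/2$ term, so again $H_x = 0$. Substituting $H_x = 0$ for all $x$ into \eqref{CAR} (and into the $m=0$ branch $\rho = W_0/(H_0+W_0)$) gives $\rho = 1$, independently of $W_0$ and $m$. Since the transition matrix $\mathbf{P}$ of $\mathcal{M}_1$ involves the backoff configuration only through $\rho$, $\mathbf{P}$ is a function of $N$, $L$, $\lambda$; hence the stationary vector $\boldsymbol{\mu}$ solving \eqref{ls} depends only on $N$, $L$, $\lambda$, and then $q$ from \eqref{TSR} does as well.

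Next I would handle the recursive terms. The same dichotomy for $\lfloor (W_x-1)/L\rfloor$ gives $\alpha_x \in \{0,1\}$, and a short check in each case shows $\mathbb{E}[U_x] = \mathbb{E}[U_x^2] = 1$ for every $x$ — the per-backoff wait is the deterministic constant of one slot, matching the intuition. Plugging $\mathbb{E}[U_x] = \mathbb{E}[U_x^2] = 1$ into the recursions of \textit{Theorem}~\ref{T1} and inducting downward from $x = m$ — noting that $1/q$ and $(2-q)/q^2$ are the fixed points of the resulting scalar recursions for $\mathbb{E}[R_x]$ and $\mathbb{E}[R_x^2]$, and that $\mathbb{E}[R_m]$ and $\mathbb{E}[R_m^2]$ already equal them — yields $\mathbb{E}[K] = \mathbb{E}[R_0] = 1/q$ and $\mathbb{E}[K^2] = \mathbb{E}[R_0^2] = (2-q)/q^2$, independent of $W_0$ and $m$. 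Finally, $\mathbb{E}[V]$ and $\mathbb{E}[V^2]$ depend only on $\lambda$, and with $\rho = 1$ we have $\mathbb{E}[S] = 1/(\lambda(1-q)+q)$; substituting $\mathbb{E}[X] = \mathbb{E}[V]+\mathbb{E}[K]$ and $\mathbb{E}[X^2] = \mathbb{E}[V^2]+\mathbb{E}[K^2]+2\mathbb{E}[V]\mathbb{E}[K]$ into \eqref{AoIUE} produces an expression for $\overline{\Delta}$ in $N$, $L$, $\lambda$ only, which is the claim.

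The only delicate point I anticipate is the boundary value $W_x - 1 = L$, where the floor jumps from $0$ to $1$: there one cannot invoke the ``floor vanishes'' shortcut and must instead spell out that the cancellations still make $H_x = 0$ and $\mathbb{E}[U_x] = \mathbb{E}[U_x^2] = 1$. Everything else is bookkeeping — tracking that each factor in \eqref{AoIUE} touches the backoff parameters only via $\rho$ and $q$, and that both of these have been fixed by $N$, $L$, and $\lambda$.
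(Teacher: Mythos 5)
Your proposal is correct and rests on the same key observation as the paper's proof: $W_m\le L+1$ forces every initialized OBO count to be at most $L$, so the counter resets to $0$ in one step, which gives $\rho=1$ (the paper phrases this as $\pi_{x,y}=0$ for $y>0$ in $\mathcal{M}_2$, you as $H_x=0$ and $\mathbb{E}[U_x]=\mathbb{E}[U_x^2]=1$ in the closed forms) and hence pins down $q$ and all the terms in \eqref{AoIUE}. You simply carry out more explicitly the downstream bookkeeping (the fixed points $\mathbb{E}[R_x]=1/q$, $\mathbb{E}[R_x^2]=(2-q)/q^2$, and the boundary case $W_x-1=L$) that the paper summarizes in one sentence.
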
}
 \begin{proof}
  %See 
  % Appendix B of \cite{CitationLiu}.
  %Appendix \ref{appC1}.
  According to the backoff process, if an STA enters level $x$, the OBO counter would be initialized as $OBO\le W_x-1 \le W_m-1\le L$. 
    % indicating that the STA always chooses to transmit when it is active.
    Thus, the backoff state of the STA must transit to $\left\langle x,0 \right\rangle$ after the STA enters level $x$, yielding that $\pi_{x,y}=0$ for $y>0$.
    Substituting $\pi_{x,y}=0$ for $y>0$ into \eqref{rho}, we have $\rho=1$.
    Consequently, we can obtain the same $\mathbb{E}[K]$, $\mathbb{E}[K^2]$, $\mathbb{E}[V]$, $\mathbb{E}[V^2]$, and $\mathbb{E}[S]$ in these contexts, leading to the same expression of $\overline{\Delta}$.
  \end{proof}
The above corollary will be applied in the subsequent section to optimize UORA parameters, thereby reducing AAoI. Moreover, the key approximation and assumption utilized to derive an analytical expression for AAoI will be validated through simulations presented in Section~\ref{sec:simulation}.
    % According to the backoff process, if an STA enters level $x$, the OBO counter would be initialized as $OBO\le W_x-1 \le W_m-1\le L$. 
    % % indicating that the STA always chooses to transmit when it is active.
    % Thus, the backoff state of the STA must transit to $\left\langle a,0 \right\rangle$ after the STA enters level $x$, yielding that $\pi_{a,b}=0$ for $b>0$.
    % Substituting $\pi_{a,b}=0$ for $b>0$ into \eqref{rho}, we have $\rho=1$.
    % Consequently, we can obtain the same $\mathbb{E}[K]$, $\mathbb{E}[K^2]$, $\mathbb{E}[V]$, $\mathbb{E}[V^2]$, and $\mathbb{E}[S]$ in these contexts, leading to the same expression of $\overline{\Delta}$.
    % See Appendix \ref{appC1}.
% \end{proof}
% }
\section{AAoI Optimization}

With the analytical expression of $\overline{\Delta}$ obtained in the previous section, 
we can formulate the following AAoI optimization problem to optimize the OBO backoff parameters in UORA networks with given $N$, $L$, and $\lambda$:
\begin{equation}\label{eq:optimization}
    (W_0^*,m^*)=\arg \min_{W_0,m}\overline{\Delta}.
\end{equation}
Note that $EOCW_{min}=\log_2(W_0)$ and $EOCW_{max}=EOCW_{min}+m$.
We note that an exhaustive search is required to find the optimal solution to the problem outlined in (\ref{eq:optimization}). 
% This necessity arises because the parameters $q$ and $\rho$ must be determined using numerical methods. 
This necessity arises because the parameters $q$ and $\rho$ are coupled together, as described by equations \eqref{TSR} and \eqref{CAR}, necessitating their determination through numerical methods. Furthermore, $\mathbb{E}[U_x]$ and $\mathbb{E}[U_x^2]$ in $\overline{\Delta}$ are difficult to analyze precisely due to their piecewise multivariate nature.
To overcome this challenge and derive deeper design insights, we aim to develop a closed-form expression for the AAoI, which will facilitate a more efficient optimization of the UORA parameters. 
To this end, this section begins by considering a simplified UORA network with ``generate-at-will'' status updates, where it is assumed that each STA has a status update arrival in every time slot, i.e., $\lambda = 1$. We then extend these results to the general case with arbitrary $\lambda$ towards the end of this section. 
%in calculating the estimated AAoI for different UORA parameters makes the general optimization methods no longer applicable to optimize the UORA parameters, we can still apply the to optimize the UORA parameters with the analytical AAoI expression in \eqref{AoIUE}. 
%Further investigation of the optimization has been left as a future work and we then focus on a special configuration to gain more insight of the UORA design.
% In this sense, we cannot express $\overline{\Delta}$ in an analytical form, indicating that a theoretical analysis on $\overline{\Delta}$ is an open question for us.
% That is, we lack design insight to devise a more effective AoI-oriented optimization algorithm.
% Therefore, further investigation is left as a future work, and we are motivated to explore and devise an efficient optimization scheme for the UORA network under a special setting.

% Recall that $q$ and $\rho$ are achieved by numerical methods given a general network setting $(N,L,W_0,m,\lambda)$.
% In this sense, we cannot express $\overline{\Delta}$ in an analytical form, indicating that a theoretical analysis on $\overline{\Delta}$ is an open question for us.

%We first propose a design concept, then investigate the AoI performance based on the design concept, followed by formulating an efficient optimization algorithm for this configuration.
%\subsection{Design Concept}
We now detail our critical observation essential for the AoI-oriented optimization of UORA networks: the OBO backoff level parameter $m$ should be minimized (i.e., set to 0) to enhance AoI performance.
% denote by $U_x$ the waiting duration until the next transmission attempt after $x$th consecutive failed transmission (i.e., $x$th consecutive backoffs) and the entry into backoff level $x$.
% After some analysis, we obtain the following expression regarding the expectation of $U_x$
% To proceed, we calculate its partial derivatives of $\mathbb{E}[U_x]$ given in Theorem \ref{T1} and have
% \begin{equation}\label{pd1}
%     \frac{\partial \mathbb{E}[U_x]}{\partial \alpha_x}\!=\!\dfrac{L^2\alpha_x^2\!+\!\left(2L\beta_x+2L\right)\alpha_x+2\beta_x^2+\left(2-L\right)\beta_x\!-\!L}{2\left(L\alpha_x+\beta_x+1\right)^2},
% \end{equation} 
% \begin{equation}\label{pd2}
%     \frac{\partial \mathbb{E}[U_x]}{\partial \beta_x}=\dfrac{\alpha_x\left(L\alpha_x+L+2\right)}{2\left(\beta_x+L\alpha_x+1\right)^2}.
% \end{equation}
% With \eqref{pd1} and \eqref{pd2}, we can demonstrate that $\mathbb{E}[U_x]$ increases as $x$ increases. 
To proceed, we present the following theorem.
\begin{theorem}\label{TUincrease}
    % If $W_1>L+1$, $\mathbb{E}[U_x]$ increases as $x$ increases.
    If $W_x>L+1$, $\mathbb{E}[U_x]$ increases as $x$ increases; otherwise, $\mathbb{E}[U_x]=1$.
\end{theorem}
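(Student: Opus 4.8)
The plan is to replace the closed form of $\mathbb{E}[U_x]$ in \textit{Theorem}~\ref{T1} by the equivalent representation
\begin{equation}\label{eq:Uavg}
    \mathbb{E}[U_x]=\frac{1}{W_x}\left(1+\sum_{v=1}^{W_x-1}\left\lceil\frac{v}{L}\right\rceil\right),
\end{equation}
obtained by grouping the $W_x$ equally likely initializations $v\in\{0,1,\dots,W_x-1\}$ of the OBO counter according to the value of $\lceil v/L\rceil$ (equivalently: an initialization $v\ge 1$ leads to a transmission attempt after $\lceil v/L\rceil$ slots, while $v=0$ leads to one after a single slot, which accounts for the leading~$1$). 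I would record \eqref{eq:Uavg} first, because both parts of the claim fall out of it almost immediately. For the ``otherwise'' clause, each summand $\lceil v/L\rceil$ with $v\ge1$ is at least $1$, so \eqref{eq:Uavg} gives $\mathbb{E}[U_x]\ge(1+(W_x-1))/W_x=1$, with equality exactly when $\lceil v/L\rceil=1$ for every $1\le v\le W_x-1$, i.e.\ exactly when $W_x-1\le L$. Hence $W_x\le L+1$ forces $\mathbb{E}[U_x]=1$ (and $W_x>L+1$ already forces $\mathbb{E}[U_x]>1$, which the next step sharpens to strict monotonicity).

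For the monotone part, fix $x$ with $W_x>L+1$; since $W_{x+1}=2W_x>L+1$ as well, it suffices to show $\mathbb{E}[U_{x+1}]>\mathbb{E}[U_x]$. Write $\Sigma_W\triangleq\sum_{v=1}^{W-1}\lceil v/L\rceil$, so that $\mathbb{E}[U_x]=(1+\Sigma_{W_x})/W_x$, and split
\begin{equation*}
    \Sigma_{2W_x}=\Sigma_{W_x}+\sum_{u=0}^{W_x-1}\left\lceil\frac{W_x+u}{L}\right\rceil.
\end{equation*}
Using the elementary bound $\lceil(W_x+u)/L\rceil\ge\lceil u/L\rceil+\lfloor W_x/L\rfloor$ (monotonicity of $\lceil\cdot\rceil$ and integrality of $\lfloor W_x/L\rfloor$) together with $\sum_{u=0}^{W_x-1}\lceil u/L\rceil=\Sigma_{W_x}$, the last sum is at least $\Sigma_{W_x}+W_x\lfloor W_x/L\rfloor\ge\Sigma_{W_x}+W_x$, where the final step uses $W_x>L$. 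Therefore $\Sigma_{2W_x}\ge2\Sigma_{W_x}+W_x$, and substituting into \eqref{eq:Uavg},
\begin{equation*}
    \mathbb{E}[U_{x+1}]-\mathbb{E}[U_x]=\frac{\Sigma_{2W_x}-2\Sigma_{W_x}-1}{2W_x}\ge\frac{W_x-1}{2W_x}>0,
\end{equation*}
the last inequality holding because $W_x>L+1\ge2$. This establishes the claimed strict increase.

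I expect the only delicate point to be the passage to \eqref{eq:Uavg} and then carrying the lone ``$+1$'' (contributed by $v=0$) consistently through the doubling $W_x\mapsto2W_x$; everything else is routine arithmetic with floors and ceilings. Conceptually there is nothing surprising: \eqref{eq:Uavg} exhibits $\mathbb{E}[U_x]$ as a running average of the nondecreasing, unbounded sequence $v\mapsto\max\{1,\lceil v/L\rceil\}$ over $\{0,1,\dots,W_x-1\}$, and doubling the averaging window of such a sequence strictly raises the average as soon as the window is long enough to include a strictly larger value — which is exactly what $W_x>L+1$ guarantees.
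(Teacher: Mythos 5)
Your proof is correct, but it takes a genuinely different route from the paper's. The paper works directly with the closed form $\mathbb{E}[U_x]=\bigl(\tfrac{L}{2}\alpha_x(\alpha_x+1)+(\alpha_x+1)\beta_x+1\bigr)/W_x$ from \textit{Theorem}~\ref{T1}: it computes the partial derivatives of this rational function with respect to $\alpha_x$ and $\beta_x$, verifies their signs (positive in $\alpha_x$ for $\alpha_x\ge1$, nonnegative in $\beta_x$), and then chains inequalities along the path $(\alpha_x,\beta_x)\to(\alpha_{x'-1},\beta_x)\to(\alpha_{x'-1},L)=(\alpha_{x'},0)\to(\alpha_{x'},\beta_{x'})$, using the doubling of $W_x$ only to guarantee $\alpha_{x'}>\alpha_x$. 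You instead return to the probabilistic meaning of $U_x$, writing $\mathbb{E}[U_x]$ as the running average of $\max\{1,\lceil v/L\rceil\}$ over the uniform initialization $v\in\{0,\dots,W_x-1\}$ (this identity is easily checked against the paper's closed form via $W_x-1=\alpha_xL+\beta_x$), and compare $\Sigma_{2W_x}$ with $2\Sigma_{W_x}$ directly. Your argument is purely discrete, sidesteps the sign analysis of the derivative numerators entirely, handles the $W_x\le L+1$ case and the monotone case from the same formula, and even yields the quantitative gap $\mathbb{E}[U_{x+1}]-\mathbb{E}[U_x]\ge(W_x-1)/(2W_x)$. The one thing to flag is that your step relies on $W_{x+1}=2W_x$ exactly (valid under the 802.11ax rule $W_x=2^xW_0$ for $x\le m$, which is the regime the theorem addresses), whereas the paper's derivative argument would survive any growth of $W_x$ sufficient to increase the quotient $\alpha_x$; given that both proofs ultimately invoke the doubling, this is a stylistic rather than substantive difference.
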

\begin{proof}
    See Appendix \ref{appC2}.
\end{proof}
% Since the smallest possible value of $\mathbb{E}[U_x]$ is $1$, Theorem \ref{TUincrease} indicates that $\mathbb{E}[U_x]$ is non-decreasing with respect to $x$.
Given that the smallest possible value of $\mathbb{E}[U_x]$ is $1$, Theorem \ref{TUincrease} indicates that $\mathbb{E}[U_x]$ is a non-decreasing function of $x$.
This suggests that, when the backoff contention window size is larger than $L+1$, an STA's expected time for its OBO counter to reset to zero lengthens after each failed transmission until the number of consecutive failures reaches $m$. 
% According to \eqref{AoIevolve}, with a sufficiently large value of $m$, if an STA experiences consecutive transmission failures, the expected instantaneous AoI of the STA will be significantly high just before it drops. 
% Consequently, we posit that a large $m$ is detrimental to the AoI performance. This assumption is supported by numerical results presented in Section~\ref{sec:simulation}. 
According to \eqref{AoIevolve}, with a sufficiently large value of $m$, if an STA experiences consecutive transmission failures with the backoff contention window size larger than $L+1$, the expected instantaneous AoI of the STA will be significantly high just before it drops. 
Consequently, we posit that a large $m$ is not beneficial to the AoI performance. This assumption is supported by numerical results presented in Section~\ref{sec:simulation}.
Consequently, to improve AoI performance in the network, we recommend setting a fixed OCW, specifically $m=0$.
% Although setting $m=0$ may increase the transmission collision rate, this issue can be alleviated by appropriately designing $W_0$.
We note that a similar method has been adopted in \cite{10778312}, where the backoff contention window size of the IEEE 802.11 distributed coordination function (DCF) mechanism is fixed and optimized to replicate an AoI-optimal Bernoulli transmission policy.
While setting $m = 0$ may increase the transmission collision rate and potentially degrade AoI performance, this issue can be addressed through the careful design of $W_0$.
With fixed $N$ and $L$, our goal is then to determine the optimal $\tilde{W}_0^*$ that minimizes $\overline{\Delta}$ when $m=0$. 
To that end, we first conduct some further theoretical analysis elaborated in the following subsection. 

\subsection{Theoretical Analysis}
% To avoid the problem that $q$ and $\rho$ have no closed-form solutions,
%To obtain more design insight for the proposed design concept,
We obtain the following theorem on AAoI $\overline{\Delta}$ when $m=0$.
\begin{theorem}\label{LB}
    Given $N$, $L$, $W_0$, with $\lambda=1$ and $m=0$, $\overline{\Delta}$ can be expressed as
    % \begin{equation}\label{AoISSo}
    % \begin{split}
    %     \overline{\Delta}&=\frac{\frac{1}{6}\alpha_0(\alpha_0+1)(2\alpha_0+1)L+(\alpha_0+1)^2\beta_0+1}{(\alpha_0+1)\alpha_0+2(\alpha_0+1)\beta_0+2}\\
    %     &+\left(\left(1-\frac{1}{L
    %     \mathbb{E}[U_0]}\right)^{1-N}-1\right)
    %     \mathbb{E}[U_0]+\frac{1}{2}, 
    % \end{split}
    % \end{equation}
    \begin{subequations}
    \begin{align}
        \overline{\Delta}&=\frac{\mathbb{E}[U_0^2]}{2\mathbb{E}[U_0]}+\frac{1-q}{q}\mathbb{E}[U_0]+\frac{1}{2}\label{UeAoISSo}\\
        &=\frac{\frac{1}{6}\alpha_0(\alpha_0+1)(2\alpha_0+1)L+(\alpha_0+1)^2\beta_0+1}{(\alpha_0+1)\alpha_0+2(\alpha_0+1)\beta_0+2}\label{AoISSo}\\
        &+\left(\left(1-\frac{1}{L
        \mathbb{E}[U_0]}\right)^{1-N}-1\right)
        \mathbb{E}[U_0]+\frac{1}{2}, \nonumber
    \end{align}
    \end{subequations}
    which is lower bounded by
    % \begin{equation}\label{AoILBSSo}
    % \begin{split}
    %     \overline{\Delta}_{LB}&=\left(\left(1-\frac{1}{L\mathbb{E}[U_0]}\right)^{1-N}-\frac{1}{2}\right)
    %     \mathbb{E}[U_0]+\frac{1}{2}.
    % \end{split}
    % \end{equation}
    \begin{subequations}
    \begin{align}
        \overline{\Delta}_{LB}&=\frac{\mathbb{E}[U_0]^2}{2\mathbb{E}[U_0]}+\frac{1-q}{q}\mathbb{E}[U_0]+\frac{1}{2}\label{UeAoILBSSo}\\
        &=\left(\left(1-\frac{1}{L\mathbb{E}[U_0]}\right)^{1-N}-\frac{1}{2}\right)
        \mathbb{E}[U_0]+\frac{1}{2}.\label{AoILBSSo}
    \end{align}
    \end{subequations}
    % In addition, $\overline{\Delta}$ and $\overline{\Delta}_{LB}$ are positively correlated.
\end{theorem}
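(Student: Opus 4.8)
The plan is to specialize the general AAoI identity \eqref{AoIUE} to the regime $\lambda=1$, $m=0$, in which every ingredient of the framework of Section~\ref{EAoIAnaylsis} becomes explicit. With $\lambda=1$ a fresh status update arrives in every slot, so the per-reception waiting time $V_\tau$ is identically $0$, giving $\mathbb{E}[V]=\mathbb{E}[V^2]=0$ and hence $X=K=R_0$; moreover $\mathbb{E}[S]=1$ by setting $\lambda=1$ in \eqref{approS}. Substituting these into \eqref{AoIUE} already collapses it to $\overline{\Delta}=\mathbb{E}[R_0^2]/(2\mathbb{E}[R_0])+\tfrac12$.

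Next I would evaluate $\mathbb{E}[R_0]$ and $\mathbb{E}[R_0^2]$. Since $m=0$ the backoff level never leaves $0$, so $R_0=R_m$ and the $x=m=0$ case of Theorem~\ref{T1} (equations \eqref{ERmCF}--\eqref{ERmSqCF}) gives $\mathbb{E}[R_0]=\mathbb{E}[U_0]/q$ and $\mathbb{E}[R_0^2]=\mathbb{E}[U_0^2]/q+2(1-q)\mathbb{E}[U_0]^2/q^2$. Forming the ratio, the powers of $q$ cancel cleanly and I obtain \eqref{UeAoISSo}: $\overline{\Delta}=\mathbb{E}[U_0^2]/(2\mathbb{E}[U_0])+\tfrac{1-q}{q}\mathbb{E}[U_0]+\tfrac12$.

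To reach the fully closed form \eqref{AoISSo} I would then pin down $q$. With $\lambda=1$ every STA is permanently active, so the stationary law of $\mathcal{M}_1$ is the point mass $\mu_N=1$; Theorem~\ref{T0} then reduces to a single term whose inner sum $\sum_{b=0}^{N-1}\binom{N-1}{b}\rho^b(1-\rho)^{N-1-b}(1-1/L)^b$ collapses by the binomial theorem to $q=(1-\rho/L)^{N-1}$. For the access probability, the $m=0$ specialization of \eqref{CAR} gives $\rho=W_0/(H_0+W_0)$, and I would prove $H_0+W_0=W_0\,\mathbb{E}[U_0]$ by expanding both sides using $\alpha_0L+\beta_0=W_0-1$ (equivalently, a renewal--reward argument: an active STA makes exactly one access per backoff cycle of mean length $\mathbb{E}[U_0]$, so $\rho=1/\mathbb{E}[U_0]$). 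Hence $q=\bigl(1-\tfrac{1}{L\mathbb{E}[U_0]}\bigr)^{N-1}$, so $\tfrac{1-q}{q}\mathbb{E}[U_0]=\bigl(\bigl(1-\tfrac{1}{L\mathbb{E}[U_0]}\bigr)^{1-N}-1\bigr)\mathbb{E}[U_0]$; substituting this together with the closed forms of $\mathbb{E}[U_0]$ and $\mathbb{E}[U_0^2]$ from Theorem~\ref{T1} (and $\alpha_0L+\beta_0+1=W_0$ to clear denominators) yields \eqref{AoISSo}.

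Finally, for the bound I would use $\mathbb{E}[U_0^2]\ge\mathbb{E}[U_0]^2$ (Jensen, i.e.\ $\mathrm{Var}(U_0)\ge0$): replacing $\mathbb{E}[U_0^2]$ by $\mathbb{E}[U_0]^2$ in \eqref{UeAoISSo} only decreases the value, which gives \eqref{UeAoILBSSo}, and then $\mathbb{E}[U_0]^2/(2\mathbb{E}[U_0])=\mathbb{E}[U_0]/2$ combines with the $\bigl(\bigl(1-\tfrac{1}{L\mathbb{E}[U_0]}\bigr)^{1-N}-1\bigr)\mathbb{E}[U_0]$ term into \eqref{AoILBSSo}. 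The only step that needs genuine care is the identity $H_0+W_0=W_0\,\mathbb{E}[U_0]$ linking $\rho$ to $\mathbb{E}[U_0]$, since it requires tracking the floor-induced quantities $\alpha_0,\beta_0$ exactly; the rest of the argument is substitution and elementary algebra.
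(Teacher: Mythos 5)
Your proposal is correct and follows essentially the same route as the paper's own proof: specialize \eqref{AoIUE} with $V=0$, $S=1$, reduce $\mathbb{E}[K]=\mathbb{E}[R_0]$ and $\mathbb{E}[K^2]=\mathbb{E}[R_0^2]$ via Theorem \ref{T1} with $m=0$ to obtain \eqref{UeAoISSo}, identify $q=(1-\rho/L)^{N-1}$ and $\rho=1/\mathbb{E}[U_0]$ (your identity $H_0+W_0=W_0\,\mathbb{E}[U_0]$ is exactly the algebraic content of the paper's \eqref{rhoSS}), and apply $\mathbb{E}[U_0^2]\ge\mathbb{E}[U_0]^2$ for the lower bound. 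The only cosmetic difference is that you derive $q$ by collapsing Theorem \ref{T0} with $\mu_N=1$ via the binomial theorem, whereas the paper argues it directly as the probability that no other STA picks the same RU; both yield the same expression.
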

\begin{proof}
See 
% Appendix C of \cite{CitationLiu}.
Appendix \ref{appB}.
\end{proof}

The expression of $\overline{\Delta}$ given in Theorem \ref{LB} is a piecewise multivariate multi-order function of $\alpha_0$ and $\beta_0$, which is significantly challenging to investigate thoroughly.
% Since $\overline{\Delta}_{LB}$ is more analytically feasible than $\overline{\Delta}$, we are motivated to minimize this lower bound of $\overline{\Delta}$ to guide the UORA parameter optimization.
To proceed, we consider its lower bound $\overline{\Delta}_{LB}$.
Clearly, the only difference between the expressions of $\overline{\Delta}$ and $\overline{\Delta}_{LB}$, given in \eqref{UeAoISSo} and \eqref{UeAoILBSSo}, is the numerator of the first term.
For \eqref{UeAoISSo}, the numerator of the first term is $\mathbb{E}[U_0^2]$, and for \eqref{UeAoILBSSo}, the numerator is $\mathbb{E}[U_0]^2$.
Our numerical results show that the ratio $(\mathbb{E}[U_0])^2/\mathbb{E}[U_0^2]$ is greater than $73\%$ under all feasible UORA configurations and this ratio increases as $L$ increases.
We thus conjecture that the deviation between $\overline{\Delta}$ and $\overline{\Delta}_{LB}$, defined as $(\overline{\Delta}-\overline{\Delta}_{LB})/\overline{\Delta}$, could be insignificant. 
We have validated this conjecture through numerical results, which show that the deviation remains below $5\%$ when $N$ and $L$ go large. This implies that $\overline{\Delta}$ and $\overline{\Delta}_{LB}$ have similar behaviors. 
Built upon this observation, we are motivated to minimize the lower bound $\overline{\Delta}_{LB}$ to guide the UORA parameter optimization.
% The complexities of the structures of \eqref{AoISSo} and \eqref{AoILBSSo} render them challenging to comprehend, thereby limiting the theoretical insights they offer.
Then, we find that the complexity of the expression structure of $\overline{\Delta}_{LB}$ renders it intractable to comprehend, thereby limiting the theoretical insights it offers.
% Hence, we consider $\overline{\Delta}_{LB}$ in a special case when $W_0-1$ is divisible by $L$. 
Hence, we consider $\overline{\Delta}_{LB}$ in a special case when $W_0-1$ is divisible by $L$, i.e., $W_0-1=L\alpha_0$.
Then, \eqref{AoILBSSo} can be rewritten and approximated as 
\begin{equation}\label{AoIapproLBSS}
\begin{split}
    &\left((1-\frac{1}{U(W_0)L})^{1-N}-\frac{1}{2}\right)U(W_0)+\frac{1}{2}\triangleq\hat{\Delta}_{LB}\\
    &\overset{(a)}{\approx}\left(\exp\left(\frac{N-1}{U(W_0)L}\right)-\frac{1}{2}\right)U(W_0)+\frac{1}{2}\triangleq \tilde{\Delta}_{LB},
\end{split}
\end{equation}
in which function
% $U(W_0)=\frac{W_0^2+(L-2)W_0+L+1}{2W_0L}$,
\begin{equation}\label{UFSS}
    U(W)=\frac{W^2+(L-2)W+L+1}{2WL},
\end{equation}
where $(a)$ is obtained by applying the approximation $(1-x)^n\approx \exp(-nx)$ for $x\in(0,1)$ and large $n\in \mathbb{N}$ to the term $q=(1-\frac{\rho}{L})^{N-1}$.
Compared with \eqref{AoILBSSo}, $\tilde{\Delta}_{LB}$ has fewer independent variables and degrees, and thus is more tractable for analysis.

Define the function 
% $\tilde{\Delta}_{LB}(W)=\left(\exp\left(\frac{N-1}{U(W)L}\right)-\frac{1}{2}\right)U(W)+\frac{1}{2}$.
\begin{equation}
    \tilde{\Delta}_{LB}(W)=\left(\exp\left(\frac{N-1}{U(W)L}\right)-\frac{1}{2}\right)U(W)+\frac{1}{2}.
\end{equation}
Then, we have the following theorem:
\begin{theorem}\label{LBroot}
    Given the integers $N>1$ and $L>1$, when
    \begin{subequations}
        \begin{align}
            &B^2-4(L+1)>0,\label{C1}\\
            &B<0\label{C2}
        \end{align}
    \end{subequations}
    hold, function $\tilde{\Delta}_{LB}(W)$ has three positive real roots, presented as
    \begin{subequations}
    \begin{align}
        &r_1=\frac{-B-\sqrt{B^2-4(L+1)}}{2},\label{R1}\\
        &r_2=\sqrt{L+1},\label{R2}\\
        &r_3=\frac{-B+\sqrt{B^2-4(L+1)}}{2},\label{R3}
    \end{align}  
    \end{subequations}
    where 
    % $B=-\frac{2(N-1)}{\mathcal{W}_0(-\frac{1}{2e})+1}+L-2$,
    \begin{equation}
        B=-\frac{2(N-1)}{\mathcal{W}_0(-\frac{1}{2e})+1}+L-2,
    \end{equation}
    and $\mathcal{W}_0(x)$ is the $0$ branch of the Lambert W function $\mathcal{W}(x)$ \cite{corless1996lambert}.
    More specifically, $r_1<r_2<r_3$, and $\tilde{\Delta}_{LB}(W)$ decreases monotonically in $(0,r_1)$ and $[r_2,r_3)$ while increases monotonically in $[r_1,r_2)$ and $[r_3,+\infty)$.
    On the contrary, when condition \eqref{C1} and \eqref{C2} do not hold at the same time, $\tilde{\Delta}_{LB}(W)$ has only one real positive root $r_2=\sqrt{L+1}$.
    Further, $\tilde{\Delta}_{LB}(W)$ decreases and increases in $(0,r_2)$ and $[r_2,+\infty)$, respectively.
    
\end{theorem}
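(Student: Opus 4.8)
The plan is to reduce the problem to the study of a single-variable rational function, obtained by factoring the derivative of $\tilde{\Delta}_{LB}(W)$. First I would note that the roots of $\tilde{\Delta}_{LB}(W)$ as a \emph{function whose monotonicity changes} are exactly the stationary points, i.e.\ the zeros of $\tilde{\Delta}_{LB}'(W)$; the wording of the theorem (``three positive real roots'' governing intervals of monotonicity) refers to these critical points. So the first step is to compute $\tilde{\Delta}_{LB}'(W)$ using $U(W)$ from \eqref{UFSS}. Writing $g(W) \triangleq \exp\!\big(\tfrac{N-1}{U(W)L}\big)$ and $\tilde{\Delta}_{LB} = \big(g(W) - \tfrac12\big)U(W) + \tfrac12$, the product rule gives $\tilde{\Delta}_{LB}' = g'(W)U(W) + \big(g(W)-\tfrac12\big)U'(W)$, and since $g'(W) = -\tfrac{N-1}{L}\,\tfrac{U'(W)}{U(W)^2}\,g(W)$, this collapses to $\tilde{\Delta}_{LB}' = U'(W)\Big[\big(1 - \tfrac{N-1}{L\,U(W)}\big)g(W) - \tfrac12\Big]$. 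Thus the critical points split into two groups: those where $U'(W) = 0$, and those where the bracket vanishes.

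Next I would handle the two groups separately. For $U'(W) = 0$: differentiating \eqref{UFSS} one finds $U'(W) = \tfrac{W^2 - (L+1)}{2W^2 L}$, which vanishes at $W = \sqrt{L+1}$ — this immediately produces the middle root $r_2 = \sqrt{L+1}$ in \eqref{R2}, valid unconditionally. For the bracket: set $\phi(W) \triangleq \big(1 - \tfrac{N-1}{L\,U(W)}\big)g(W) - \tfrac12 = 0$. The trick here is to substitute $u \triangleq \tfrac{N-1}{L\,U(W)}$, so that $g = e^{u}$ and the equation becomes $(1-u)e^{u} = \tfrac12$, i.e.\ $(1-u)e^{1-u} = \tfrac{1}{2e}$, which by definition of the Lambert $W$ function gives $1-u = \mathcal{W}_0(-\tfrac{1}{2e})$ (the $0$-branch is selected because $-\tfrac{1}{2e} \in (-\tfrac1e, 0)$ and we need the solution branch with $1-u$ in the appropriate range; I would verify $\mathcal{W}_0(-\tfrac{1}{2e}) + 1 > 0$ so that $u > 0$, consistent with $U(W)>0$). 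This pins down $U(W)$ to the single value $U(W) = \tfrac{N-1}{L(\mathcal{W}_0(-1/(2e)) + 1)}$. Now $U(W) = c$ is, via \eqref{UFSS}, a quadratic in $W$: $W^2 + (L-2-2Lc)W + (L+1) = 0$, and setting $B \triangleq L - 2 - 2Lc = L - 2 - \tfrac{2(N-1)}{\mathcal{W}_0(-1/(2e))+1}$ recovers exactly the stated $B$. Its two roots are $\tfrac{-B \pm \sqrt{B^2 - 4(L+1)}}{2}$, which are the $r_1, r_3$ of \eqref{R1}, \eqref{R3}; they are real and positive precisely when $B^2 - 4(L+1) > 0$ and $B < 0$ (the product of roots is $L+1>0$ and their sum is $-B$, so positivity of both roots forces $B<0$), matching conditions \eqref{C1}–\eqref{C2}.

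Finally I would assemble the monotonicity picture via a sign analysis of $\tilde{\Delta}_{LB}'(W) = U'(W)\,\phi(W)$. The factor $U'(W) = \tfrac{W^2 - (L+1)}{2W^2L}$ is negative on $(0, r_2)$ and positive on $(r_2, \infty)$. For $\phi(W)$: as $W$ ranges over $(0,\infty)$, $U(W)$ decreases on $(0,r_2)$ from $+\infty$ to its minimum $U(r_2)$ and increases back to $+\infty$ on $(r_2,\infty)$, so $\phi$ is unimodal on each of these two intervals; one checks $\phi \to -\tfrac12 < 0$ as $W\to 0^+$ and as $W\to\infty$, and $\phi$ is positive exactly on the $W$-interval where $U(W) < c$, i.e.\ $r_1 < W < r_3$ when \eqref{C1}–\eqref{C2} hold (and $\phi < 0$ everywhere otherwise, since then $U(W) \ge U(r_2) \ge c$ fails to dip low enough — this needs the observation that $U(r_2) \ge c$ is equivalent to $B^2 \le 4(L+1)$ or $B \ge 0$). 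Multiplying the two sign patterns on the ordered breakpoints $r_1 < r_2 < r_3$ (I would separately confirm $r_1 < \sqrt{L+1} < r_3$ from $r_1 r_3 = L+1$) yields: $\tilde{\Delta}_{LB}'<0$ on $(0,r_1)$, $>0$ on $(r_1,r_2)$, $<0$ on $(r_2,r_3)$, $>0$ on $(r_3,\infty)$ — exactly the claimed alternation; and in the degenerate case only the $U'$ sign change at $r_2$ survives, giving decrease then increase. The main obstacle I anticipate is the careful justification that $\phi(W) < 0$ throughout $(0,\infty)$ when conditions \eqref{C1}–\eqref{C2} fail — this requires relating the minimum value $U(r_2) = \tfrac{(\sqrt{L+1}-1)^2 + \text{(lower order)}}{\cdots}$, equivalently $\sqrt{L+1} + \tfrac{L}{2} - 1$ after simplification, to the threshold $c$, and showing the quadratic $U(W) = c$ has no real root in that regime, which is the content of \eqref{C1}. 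The rest is bookkeeping with the Lambert $W$ identity and sign charts.
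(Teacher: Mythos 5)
Your overall route is the same as the paper's: differentiate, factor $\tilde{\Delta}_{LB}'(W)$ into $U'(W)=\frac{W^2-(L+1)}{2LW^2}$ (the paper's $f_1$) times a bracket (the paper's $f_2$), read off $r_2=\sqrt{L+1}$ from the first factor, solve the bracket via the Lambert $W$ identity to get the quadratic $W^2+BW+(L+1)=0$ and hence $r_1,r_3$ under \eqref{C1}--\eqref{C2}, and finish with a sign chart. Your branch selection for $\mathcal{W}_0$ and your derivation of $B$ match the paper exactly.

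However, your sign analysis of the bracket factor contains a genuine error. Writing $u=\tfrac{N-1}{LU(W)}$, your $\phi=(1-u)e^{u}-\tfrac12$ satisfies $\phi\to(1-0)\cdot 1-\tfrac12=+\tfrac12$ as $W\to 0^{+}$ and as $W\to\infty$ (since $U(W)\to+\infty$ so $u\to 0$), not $-\tfrac12$ as you claim; the paper correctly records $\lim_{W\to 0^{+}}f_2=\lim_{W\to+\infty}f_2=\tfrac12$. Moreover, since $(1-u)e^{u}$ is strictly decreasing in $u>0$ and $u$ is decreasing in $U(W)$, $\phi$ is \emph{increasing} in $U(W)$, so $\phi>0$ precisely where $U(W)>c$, i.e.\ on $(0,r_1)\cup(r_3,+\infty)$, and $\phi<0$ on $(r_1,r_3)$ --- the opposite of what you wrote. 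Taken literally, your stated signs ($U'<0$ and $\phi<0$ on $(0,r_1)$) would give $\tilde{\Delta}_{LB}'>0$ there, contradicting the monotone-decreasing behaviour on $(0,r_1)$ that both you and the theorem assert. The fix is mechanical (flip the sign of $\phi$ everywhere), after which your chart reproduces the claimed alternation, and your degenerate-case argument ($U(W)>c$ for all $W>0$ when the quadratic has no positive root, hence $\phi>0$ throughout and only the sign change of $U'$ at $r_2$ remains) also goes through. Once corrected, your treatment of the bracket's sign via monotonicity in $U(W)$ is actually more direct than the paper's, which instead enumerates two candidate stationary-point configurations and eliminates one using the boundary limits of $f_1$ and $f_2$ together with $\tilde{\Delta}_{LB}(r_1)=\tilde{\Delta}_{LB}(r_3)$.
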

\begin{proof}
See 
% Appendix D of \cite{CitationLiu}.
Appendix \ref{appC}.
\end{proof}

Theorem \ref{LBroot} delineates two possible behaviors of $\tilde{\Delta}_{LB}(W)$ as $W$ increases from $0$ to $+\infty$: (1) $\tilde{\Delta}_{LB}(W)$ initially decreases and then increases, and 
(2) $\tilde{\Delta}_{LB}(W)$ undergoes an initial decrease, followed by an increase, a subsequent decrease, and an eventual increase.
These observations are key to developing an efficient AoI-oriented optimization algorithm for UORA networks, introduced in the next subsection.

\subsection{Optimization Algorithm Design}\label{AD}
Recall that we aim to minimize the lower bound of $\overline{\Delta}$ to guide the UORA parameter optimization.
This can be achieved by using the behavior of $\overline{\Delta}_{LB}(W)$, defined as \eqref{AoILBSSo} by substituting $\alpha_0=\lfloor(W-1)/L\rfloor$ and $\beta_0=W-1-\alpha_0L$.
Note that we only obtain the behavior of an approximation of $\overline{\Delta}_{LB}(W)$ in $(0,+\infty)$ when $W - 1$ is divisible by $L$, namely $\tilde{\Delta}_{LB}(W)$. However, it is not immediately evident that $\tilde{\Delta}_{LB}(W)$ exhibits similar behavior to $\overline{\Delta}_{LB}(W)$. 
To proceed, we define function $\hat{\Delta}_{LB}(W)$ as the left hand side of \eqref{AoIapproLBSS} by replacing $W_0$ with $W$ and define 
\begin{equation}
    \boldsymbol{W}_{0,z}^M\triangleq \{2^z|z\in\mathbb{N}\} \bigcap \{zL+1|z\in\mathbb{N}\}.
\end{equation}
Clearly, $[\overline{\Delta}_{LB}(W_0)]_{W_0\in \boldsymbol{W}_{0,z}^M}$ is the collection of $\overline{\Delta}_{LB}$ when $W_0-1$ is divisible by $L$.
% Clearly, elements of $\boldsymbol{W}_{0,z}^M$ are admissible for both $\overline{\Delta}_{LB}(W)$ and $\hat{\Delta}_{LB}(W)$.
We thus have $[\hat{\Delta}_{LB}(W_0)]_{W_0\in \boldsymbol{W}_{0,z}^M}=[\overline{\Delta}_{LB}(W_0)]_{W_0\in \boldsymbol{W}_{0,z}^M}$, implying that $[\hat{\Delta}_{LB}(W_0)]_{W_0\in \boldsymbol{W}_{0,z}^M}$ can be fitted to both $\hat{\Delta}_{LB}(W)$ and $\overline{\Delta}_{LB}(W)$ in $(0,+\infty)$.
Thus, $\overline{\Delta}_{LB}(W)$ and $\hat{\Delta}_{LB}(W)$ tend to have similar behaviors in $(0,+\infty)$.
Considering $\tilde{\Delta}_{LB}(W)$ is approximate to $\hat{\Delta}_{LB}(W)$, we conjecture that $\tilde{\Delta}_{LB}(W)$ also tends to have the similar behavior in $(0,+\infty)$. We will validate this conjecture using numerical results in Section \ref{sec:simulation}.

With $m=0$, we tend to consider admissible $W_0$ close to $W^*=\min_W\tilde{\Delta}_{LB}(W)$, i.e., the positive roots of $\tilde{\Delta}_{LB}(W)$, to be the AoI-oriented UORA parameter setups.
Recalling Corollary \ref{AoINoChange} and $r_1<r_2=\sqrt{L+1}<L+1$, we only need to consider one of $r_1$ and $r_2$.
Following Theorem \ref{LBroot}, we develop an efficient parameter search algorithm, formally described in  \textbf{Algorithm \ref{algLC}}.
% where $\emptyset$ denotes the empty set.
% \begin{algorithm}
%     \caption{An efficient search algorithm}\label{algLC}
%     \textbf{Initialization}: $N$, $L$, $m\gets 0$, and $\boldsymbol{\mathcal{E}}\gets\emptyset$\;
%     \eIf{$B<0$ \&\& $B^2-4(L+1)>0$}{
%     Calculate $r_1$ and $r_3$ by \eqref{R1} and \eqref{R3}, respectively\;
%     $E_1\gets \min\{\max\{\log_2(r_1),1\},1023\}$\; $E_3\gets\min\{\max\{\log_2(r_3),0\},1023\}$\;
%     \For{$i\gets 0,3$}{
%         \eIf{$E_i\in \mathbb{N}$}{
%         $\boldsymbol{\mathcal{E}}\gets\boldsymbol{\mathcal{E}}\bigcup\{E_i\}$\;
%         }
%         {$\boldsymbol{\mathcal{E}}\gets\boldsymbol{\mathcal{E}}\bigcup\{\lfloor E_i\rfloor,\lceil E_i\rceil\}$\;}
%     }
%   }{Calculate $r_2$ by \eqref{R2}\;
%   $E_2\gets \min\{\max\{\log_2(r_2),0\},1023\}$\;
%   $i\gets 2$ and obtain $\boldsymbol{\mathcal{E}}$ following steps $7$ to $10$\;
%   }
%   \textbf{Output}: 
%   $\tilde{W}_0^*\gets \arg\min_{W_0=2^e,e\in\boldsymbol{\mathcal{E}}}\overline{\Delta}$ and $m$\;
% \end{algorithm}
\begin{algorithm}
    \caption{An efficient parameter search algorithm}\label{algLC}
    \textbf{Initialization}: $N$, $L$, $m\gets 0$\;
    \eIf{$B<0$ \&\& $B^2-4(L+1)>0$}{
    Calculate $r_3$ by \eqref{R3}\;
    % $E_1\gets \min\{\max\{\log_2(r_1),1\},1023\}$\;
    $E\gets\min\{\max\{\log_2(r_3),L+1\},7\}$\;
    % \For{$i\gets 0,3$}{
        \eIf{$E\in \mathbb{N}$}{
        $\boldsymbol{\mathcal{E}}\gets\{E\}$\;
        }
        {$\boldsymbol{\mathcal{E}}\gets\{\lfloor E\rfloor,\lceil E\rceil\}$\;}
    % }
  }{Calculate $r_2$ by \eqref{R2}\;
  $E\gets \min\{\log_2(r_2),7\}$\;
  % $E\gets \min\{\max\{\log_2(r_2),0\},1023\}$\;
  % $i\gets 2$ and 
  Obtain $\boldsymbol{\mathcal{E}}$ following steps $5$ to $9$\;
  }
  \textbf{Output}: 
  $\tilde{W}_0^*\gets \arg\min_{W_0=2^e,e\in\boldsymbol{\mathcal{E}}}\overline{\Delta}$ and $m$\;
\end{algorithm}
Compared with the exhaustive search method, \textbf{Algorithm \ref{algLC}} only needs to compare the AAoI of three possible values of $W_0$, resulting in significantly higher efficiency.

%However, Theorem \ref{LBroot} cannot be applied in the UORA network with the stochastic arrival model, i.e., $\lambda<1$, rendering Algorithm \ref{algLC}
% the improvement of $\overline{\Delta}$ through the optimization of its lower bound 
%ineffective.
Finally, we discuss how to extend Algorithm \ref{algLC} to a more general case with $\lambda<1$. Specifically, we devise a search policy focused on identifying $W_0$ that minimizes $\overline{\Delta}$ with $m=0$.
% , given in Algorithm \ref{algLCv2}. 
% To proceed , by Corollary \ref{AoINoChange} and Theorem \ref{LBroot}, we intuitively assume that $\overline{\Delta}$ with $EOCW_{min}\ge \log_2(L+1)$ behaves similarly to $\tilde{\Delta}_{LB}(W)$ in $(r_2,+\infty)$, and only has one local minimum point when $m=0$ and $\lambda<1$.
{
Our conjecture is that given $m=0$, the AAoI $\overline{\Delta}$ of UORA networks with stochastic arrival of status updates and that with generate-at-will status updates tend to have similar behaviors.
The rationale is that after $\mathcal{M}_1$ of a UORA network with $\lambda<1$ enters its steady state, the expected number of active STAs in each slot is $\sum^N_{a=1}a\mu_a$.
Building on this, we can consider the UORA network with $\lambda < 1$ as a generate-at-will UORA network comprising $\sum^N_{a=1}a\mu_a$ STAs. 
The effectiveness of this conjecture has been confirmed through numerical results; however, establishing a rigorous theoretical foundation for its application remains an area for future research. 
Under this conjecture and Theorem \ref{LB}, we state that $\overline{\Delta}$ behaves similarly to $\tilde{\Delta}_{LB}(W)$.
Moreover, by Corollary \ref{AoINoChange} and Theorem \ref{LBroot}, we only need to consider $\tilde{\Delta}_{LB}(W)$ in $(r_2,+\infty)$ for the parameter optimization, which has at most one local minimum point.}
Built upon this, we propose an extended efficient search algorithm, given in Algorithm~\ref{algLCv2}.
\begin{algorithm}
    \caption{An extended efficient search algorithm}\label{algLCv2}
    \textbf{Initialization}: $N$, $L$, $\lambda$, $m\gets 0$, $e\gets \lfloor\log_2(L+1)\rfloor$\;
    Compute $\overline{\Delta}$ with $(N,L,\lambda,m,W_0=2^e)$ by \eqref{AoIUE}\;
    $D_0\gets \overline{\Delta}$\;
    \While{$e\le 7$}{
    $e\gets e+1$\;
    Compute $\overline{\Delta}$ with $(N,L,\lambda,m,W_0=2^e)$ by \eqref{AoIUE}\;
    $D_1\gets \overline{\Delta}$\;
    \If{$D_1>D_0$}
    {
        $\tilde{W}_{0,e}^*\gets W_0/2$\;
        break\;
    }
    $D_0\gets D_1$\;
    }
  \textbf{Output}: 
  $\tilde{W}_{0,e}^*$ and $m$\;
\end{algorithm}
\vspace{-1em}
% This algorithm is empirically formulated based on the assumption that the behavior of $\overline{\Delta}$ is likely to be similar to that of $\hat{\Delta}_{LB}(W)$ when $\lambda<1$.
 %The theoretical foundation of this assumption remains an area for future investigation.

\section{Simulation Results}\label{sec:simulation}
In this section, we first demonstrate the accuracy of the proposed analytical model.
% Subsequently, we verify the theoretical analyses of the AoI performance in the generate-at-will model with $m=0$, followed by evaluating the effectiveness of the proposed algorithm through comparison with the exhaustive search method and the round-robin scheme. 
Subsequently, we verify the theoretical analyses of the AoI performance in the generate-at-will model with $m=0$, followed by evaluating the effectiveness of the proposed algorithm through comparison with the exhaustive search method and two centralized schemes.
In the following figures, each simulation result is generated by averaging over $10^6$ Monte-Carlo simulation~runs.
Note that we present the relevant parameters within the discussion or alongside the corresponding figures for each simulation.

% \begin{figure}
%      \centering
%         % \captionsetup{labelformat=empty}
%      \begin{subfigure}[t]{0.45\textwidth}
%          \centering
%          \includegraphics[width=\textwidth]{q_vs_La_Emin=2_m=4.eps}
%          \caption{Successful transmission probability.}
%          \label{Aq}
%      \end{subfigure}
%      % \hfill
%      \begin{subfigure}[t]{0.45\textwidth}
%          \centering
%          \includegraphics[width=\textwidth]{rho_vs_La_Emin=2_m=4.eps}
%          \caption{Accessing probability of an active STA.}
%          \label{Arho}
%      \end{subfigure}
%      \hfill
%         \caption{UORA accessing performances versus the packet arrival rate, $\lambda$, with $EOCW_{min}=2$ and $m=4$.}
%         \label{q_rho_v_La}
%     \vspace{-1em}
% \end{figure}
\begin{figure}[t]
	\centering    \includegraphics[width=0.45\textwidth]{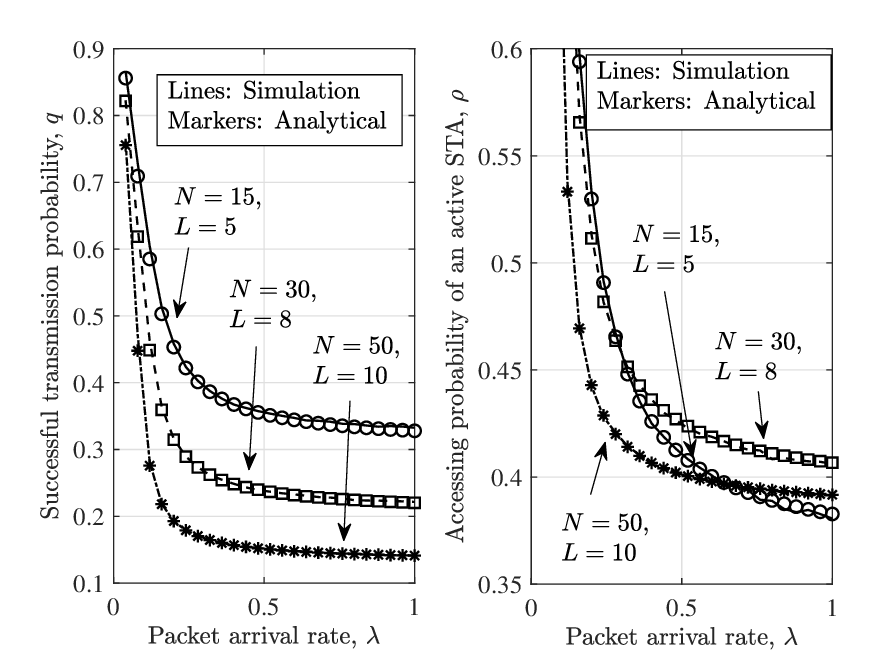}
	\caption{Successful transmission rate $q$ and conditional accessing rate $\rho$ versus packet arrival rate $\lambda$.}
	\label{q_rho_v_La}
\vspace{-1em}
\end{figure}
Fig. \ref{q_rho_v_La} includes two subplots demonstrating the evaluations of the successful transmission rate, $q$, and the probability that an active STA accesses one of the RUs, $\rho$, in different setups, respectively.
The configurations of $N$ and $L$ are given in the subplots.
This evaluation can assess the effectiveness of the analytical model based on $\mathcal{M}_1$ and $\mathcal{M}_2$.
We set $EOCW_{min}=2$ and $m=4$.
We can observe from Fig. \ref{q_rho_v_La} that the simulation and analytical results align closely in each configuration, verifying the effectiveness of our analytical model.
% Additionally, Fig. \ref{Aq} shows that $q$ decreases as $\lambda$ increases in all scenarios.
% Similarly, as $\lambda$ increases, the curves of $\rho$ consistently decrease in Fig. \ref{Arho}. 
Additionally, the left subplot shows that $q$ decreases as $\lambda$ increases in all scenarios.
Similarly, as $\lambda$ increases, the curves of $\rho$ consistently decrease in the right subplot. 
These behaviors are intuitive because a higher packet arrival rate increases the number of the active STAs as well as the accessing STAs, resulting in more frequent transmission collisions.
On the other hand, due to the backoff mechanism, the growth rate of the number of the accessing STAs is slower than that of the active STAs, which leads to a reduction in the accessing rate, $\rho$, of the active STAs.

\begin{figure}[t]
	\centering    \includegraphics[width=0.45\textwidth]{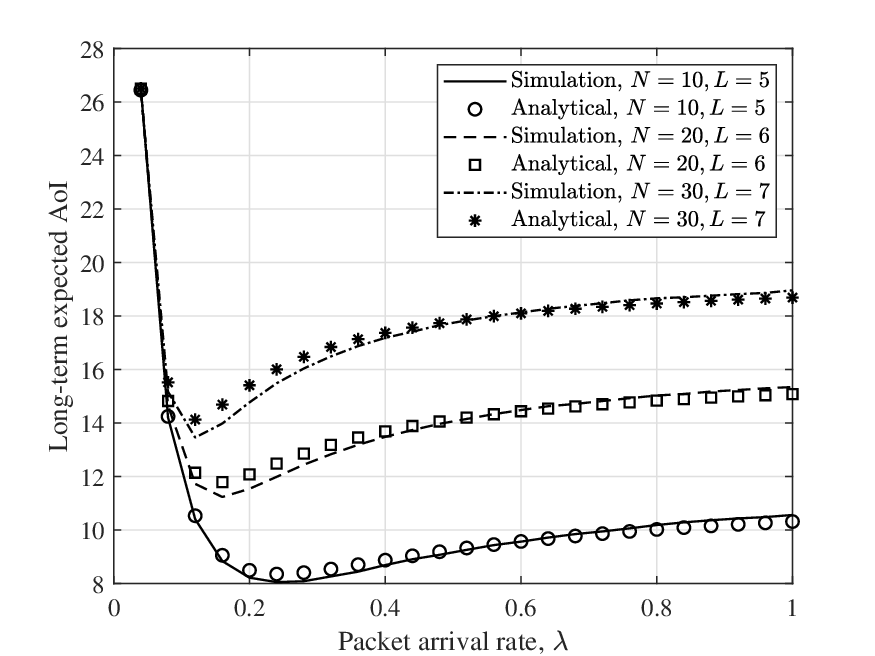}
	% \caption{AAoI versus packet arrival rate $\lambda$.}
    \caption{AAoI versus packet arrival rate $\lambda$, with $EOCW_{min}=3$ and $m=3$.}
	\label{AoI_v_La}
\vspace{-1em}
\end{figure}
Fig. \ref{AoI_v_La} depicts the AAoI performance of the UORA network versus the packet arrival rate in different setups.
% We can see from Fig. \ref{AoI_v_La} that the discrepancies between the analytical and simulation outcomes are less than 0.5\%, demonstrating the validity of the proposed analytical model.
We can see from Fig. \ref{AoI_v_La} that the discrepancies between the analytical and simulation results are larger compared to Fig. \ref{q_rho_v_La}.
This is due to the adopted approximation for analyzing the expected service time $\mathbb{E}[S]$, which may become less accurate for certain values of $\lambda$.
Nevertheless, the discrepancies are less than 0.5\%, confirming the validity of the proposed analytical model.
Furthermore, Fig. \ref{AoI_v_La} shows that the AAoI first decreases and subsequently increases as $\lambda$ increases.
This is intuitive as the increase of $\lambda$ increases the number of transmissions in the network, reducing system times while increasing the probability of transmission collisions.
When $\lambda$ is slightly large, the effect of the reduction of system times is greater than that of the increased probability of collision, resulting in the decrease of the curves.
% Moreover, a significantly high $\lambda$ introduces a large number of transmissions, increasing the probability of transmission collisions, which leads to the EAoIs ultimately increasing.
% On the contrary, the impact of increased probability of collision dominates with a significantly large $\lambda$, which leads to the EAoIs ultimately increasing.
In contrast, the effect of increased collision probability dominates when $\lambda$ increases significantly, leading to an eventual increase in~AAoI.
% \begin{figure}[t]
% 	\centering    \includegraphics[width=0.45\textwidth]{AoI_vs_Emin_La=0.6_N=15_L=5_vB.eps}
% 	\caption{AAoI versus $EOCW_{min}$.}
% 	\label{AoI_v_Emin}
%  \vspace{-1em}
% \end{figure}
\begin{figure}[t]
	\centering    \includegraphics[width=0.45\textwidth]{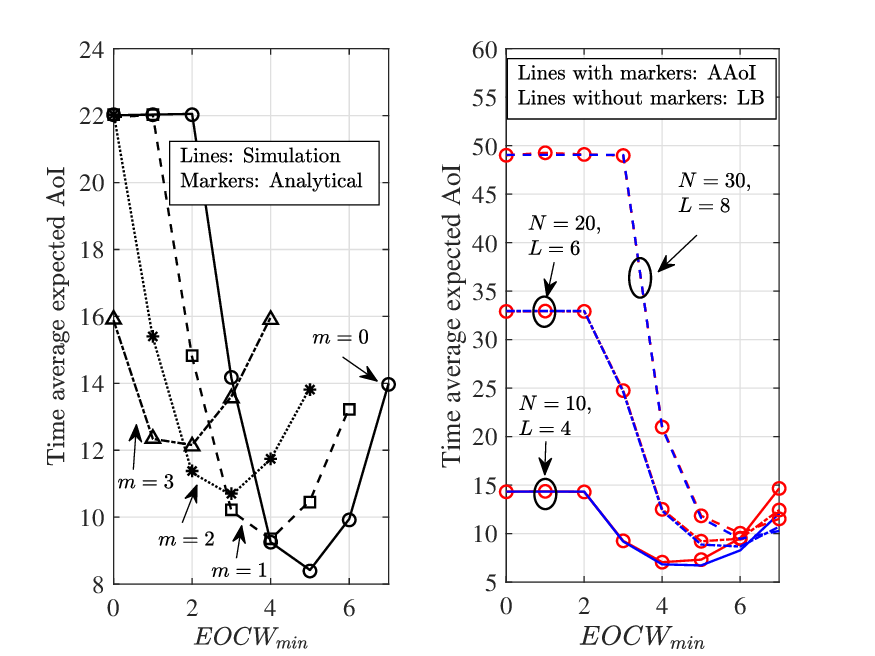}
	\caption{AAoI versus $EOCW_{min}$.}
	\label{AoI_v_Emin}
 \vspace{-1em}
\end{figure}

% In Fig. \ref{AoI_v_Emin}, we show the AAoI performance of the UORA network versus $EOCW_{min}$ with different configurations.
% We set $N=15$, $L=5$, and $\lambda=0.6$.
% Fig. \ref{AoI_v_Emin} demonstrates that, when $m=0$, AAoI remains unchanged as $EOCW_{min}$ increases from $0$ to $2$. 
% This can be explained by Corollary \ref{AoINoChange} that $\overline{\Delta}$ is the same when $W_m\le L+1$. 
% Fig. \ref{AoI_v_Emin} also shows that the AAoI is initially non-increasing and then non-decreasing as $EOCW_{min}$ increases when $m=0,2$.
% This is understandable as the increase of $EOCW_{min}$ causes $W_0$ to rise, thereby reducing the number of simultaneous transmissions, which in turn increases the transmission success rate.
% However, by \eqref{pd1} and \eqref{pd2}, the initialized OBO counter of an STA may be considerably large as $W_0$ increases, leading to the curves eventually increasing.
% Additionally, the AAoI exhibits an initial increase, followed by a decrease, and eventually rises again as $EOCW_{min}$ increases for $m=4,6$.
% % These behaviors are similar to that of samples of $\tilde{\Delta}_{LB}(W)$ described in Theorem \ref{LBroot}.
% This observation is intuitive because a relatively large $m$ increases the probability of initializing a large OBO counter after a transmission failure, and thus increases the AAoI even if $EOCW_{min}$ is small.
% %This observation is due to that a relatively large $m$ increases the probability a large OBO counter is initialized after a failed transmission even when $EOCW_{min}$ is small.
In the left subplot of Fig. \ref{AoI_v_Emin}, we show the AAoI performance of the UORA network versus $EOCW_{min}$ with $m=0,1,2,3$, respectively.
We set $N=15$, $L=5$, and $\lambda=0.6$.
Note that the curves for $m=1,2,3$ are not plotted completely in the left subplot since $EOCW_{max}=EOCW_{min}+m\le 7$.
% $EOCW_{max}$'s of the corresponding configurations are not in the predefined IEEE 802.11ax $EOCW$ range.
The left subplot demonstrates that when $m=0,1$, AAoI initially remains unchanged as $EOCW_{min}$ increases. 
This can be explained by Corollary \ref{AoINoChange} that $\overline{\Delta}$ is the same when $W_m\le L+1$. 
The left subplot also shows that the AAoI is initially non-increasing and then non-decreasing as $EOCW_{min}$ increases for all settings.
This is understandable as the increase of $EOCW_{min}$ causes $W_0$ to rise, thereby reducing the number of simultaneous transmissions, which in turn increases the transmission success rate.
However, by \eqref{pd1} and \eqref{pd2}, the initialized OBO counter of an STA may be considerably large as $W_0$ increases, 
% leading to the curves eventually increasing.
leading to a significant decrease in the number of accessing STAs, which yields the eventual increases of the curves.

% \begin{figure}[t]
% 	\centering    \includegraphics[width=0.45\textwidth]{AoI_vs_N_La=0.4_Emin=2_m=4.eps}
% 	\caption{AAoI versus the number of the STAs of $N$, where $\lambda=0.4$, $EOCW_{min}=2$, and $m=4$.}
%     \Description[AAoI versus $N$.]{AAoI versus the number of the STAs of $N$, where $\lambda=0.4$, $EOCW_{min}=2$, and $m=4$.}
% 	\label{AoI_v_N}
% \end{figure}

% \begin{figure}[t]
% 	\centering    \includegraphics[width=0.45\textwidth]{AoIandLB_vs_Emin_La=1_m=0_vB.eps}
% 	\caption{AAoI and its lower bound $\overline{\Delta}_{LB}$ versus $EOCW_{min}$ with $\lambda=1$ and $m=0$.}
% 	\label{AoI&LB_v_Emin}
%  \vspace{-1em}
% \end{figure}
The right subplot of Fig. \ref{AoI_v_Emin} plots the AAoI and their proposed lower bounds with the increasing $EOCW_{min}$ and $m=0$ under the generate-at-will model.
We set three parameter configurations as follows: $N=10, L=4$; $N=20, L=6$; and $N=30, L=8$.
We can observe from the right subplot that the lower bound is consistently smaller than the associated AAoI performance across all configurations, validating our analysis presented in Theorem \ref{LB}.
The right subplot also shows that 
% the curve of the lower bound is similar to that of the AAoI in each configuration.
the AAoI and its lower bound have similar trends.
% the AAoI and its lower bound have similar trends because of the positive correlation between them.

\begin{figure}[t]
	\centering    \includegraphics[width=0.44\textwidth]{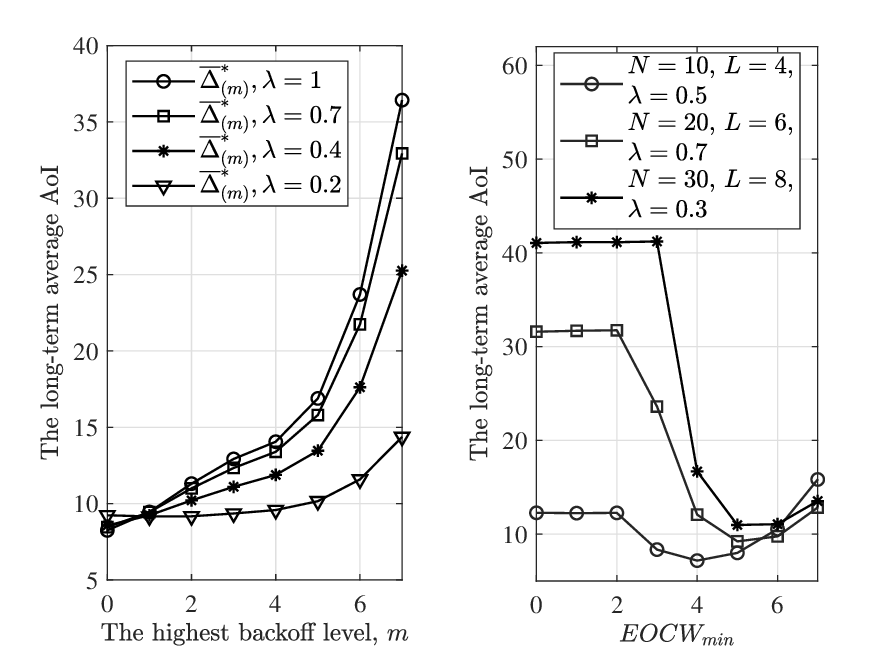}
	\caption{Verifications of the conjectures in algorithm design.}
	\label{AV}
 \vspace{-1em}
\end{figure}
% Fig. \ref{AV} offers two subplots used to verify the conjectures adopted in the design of parameter searching algorithms.
% The left subplot shows the AAoI versus the highest backoff level $m$ across different $\lambda$.
Fig. \ref{AV} offers two subplots used to verify the conjectures adopted in the design of parameter searching algorithms.
The left subplot shows the AAoI versus the highest backoff level $m$, across varying values of $\lambda$ being $1,0.7,0,4$, and $0.2$.
We set $N=12$ and $L=4$.
% Specifically, the special exhaustive search method outputs $W_{0,m}^*=\arg\min_{W_0}\overline{\Delta}$ given a fixed $m$.
Specifically, for a fixed $m$, we exhaustively search over $W_0$ to find $W_{0,m}^*=\arg\min_{W_0}\overline{\Delta}$. 
We denote the AAoI under parameters $(W_{0,m}^*,m)$ by $\overline{\Delta}_{(m)}^*$. 
Fig. \ref{AV} shows that $\overline{\Delta}_{(m)}^*$ deteriorates as $m$ increases for different $\lambda$.
This observation provides an empirical rationale for our algorithm design focusing on $m=0$.
% The right subplot depicts curves of AAoI as $EOCW_{min}$ increases in different setups with $m=0$.
The right subplot depicts curves of AAoI as $EOCW_{min}$ increases with $m=0$ across three different sets of configurations being $N=10,L=4,\lambda=0.5$, $N=20,L=6,\lambda=0.7$, and $N=30,L=8,\lambda=0.3$.
We observe that the AAoI has one local minimum when $\lambda<1$, consistent with our assumption in Section~\ref{AD}.

\begin{figure}[t]
	\centering    \includegraphics[width=0.45\textwidth]{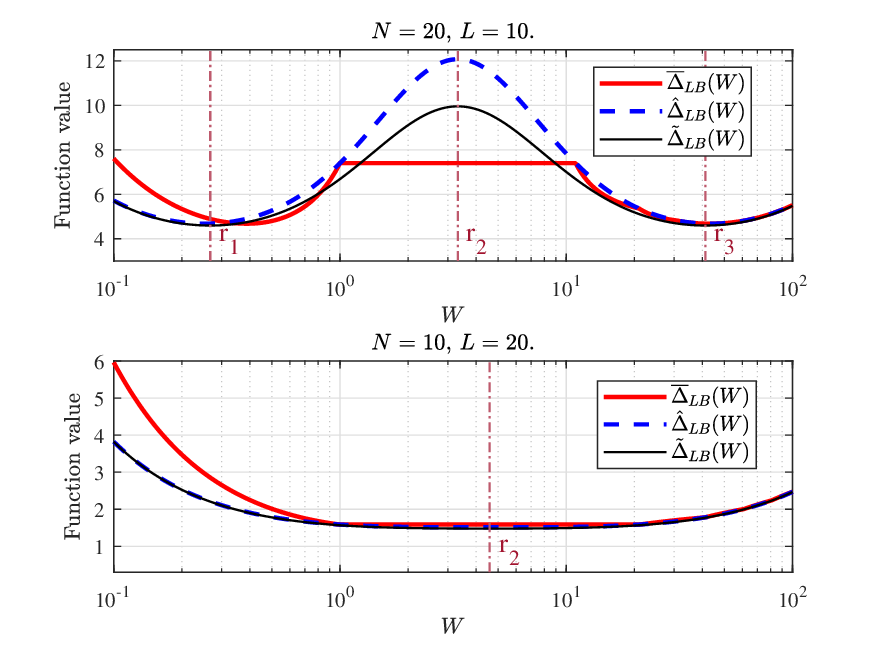}
	\caption{Function $\overline{\Delta}_{LB}(W)$, $\hat{\Delta}_{LB}(W)$ and $\tilde{\Delta}_{LB}(W)$.}
	\label{ThreeFunctions}
 \vspace{-1em}
\end{figure}
In Fig. \ref{ThreeFunctions}, we plot the curves for three functions: $\overline{\Delta}_{LB}(W)$, $\hat{\Delta}_{LB}(W)$ and $\tilde{\Delta}_{LB}(W)$ in the range  $W\in(0,+\infty)$, for two cases, shown in two subplots.
In the first case, we set $N=20$ and $L=10$, which satisfies conditions \eqref{C1} and \eqref{C2}.
In the second case, we set $N=10$ and $L=20$, making conditions \eqref{C1} and \eqref{C2} invalid.
The roots are calculated by \eqref{R1}, \eqref{R2}, and \eqref{R3}.
We can see from Fig. \ref{ThreeFunctions} that the trajectories of $\tilde{\Delta}_{LB}(W)$ in the two cases follow that given in Theorem \ref{LBroot}.
The locations of the roots in both cases align with Theorem \ref{LBroot} as well.
Moreover, $\overline{\Delta}_{LB}(W)$ remains unchanged within a segment of $W$ in each case.
The rationale is that these segments correspond to the cases described in Corollary \ref{AoINoChange}.
% Fig. \ref{ThreeFunctions} exhibits that the behaviors of $\overline{\Delta}_{LB}(W)$ and $\hat{\Delta}_{LB}(W)$ are analogous, in line with our assumption.

\begin{figure}[t]
	\centering    \includegraphics[width=0.44\textwidth]{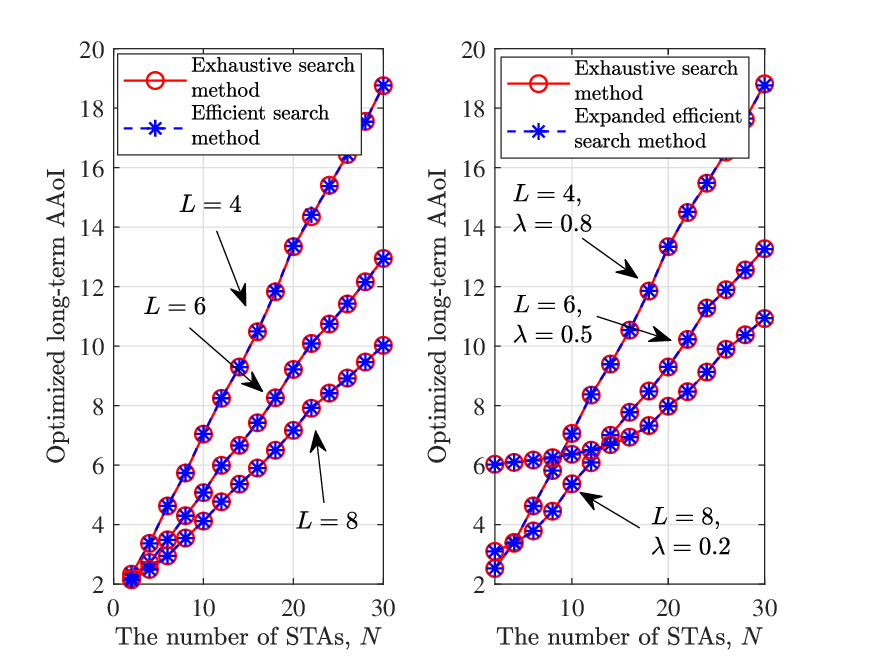}
	\caption{Optimized AAoI versus the number of the STAs, $N$, in two setups.}
	\label{FS_v_LC}
 \vspace{-1em}
\end{figure}
Fig. \ref{FS_v_LC} illustrates the optimized AAoI performances of the UORA networks versus $N$ in two different network setups.
Each setup features three parameter configurations, which are displayed in the respective subplots.
% In Fig. \ref{LC1}, the generate-at-will model, i.e., $\lambda=1$, is considered, and we compare the exhaustive search method with the efficient search method outlined in Algorithm \ref{algLC}. 
% Additionally, in Fig. \ref{LC2}, we consider status updates randomly arriving at the 
% STAs with the arrival rates shown near the corresponding curves. 
In the left subplot, the generate-at-will model, i.e., $\lambda=1$, is considered, and we compare the exhaustive search method with the efficient search method outlined in Algorithm \ref{algLC}. 
Additionally, in the right subplot, we consider status updates randomly arriving at the 
STAs with the arrival rates shown near the corresponding curves.
We also compare the extended efficient search presented in Algorithm \ref{algLCv2} and the exhaustive search approach. 
It is depicted in Fig. \ref{FS_v_LC} that the proposed efficient parameter search algorithms perform comparably to the corresponding exhaustive search methods, which minimize the AoI performance of UORA networks. 
The observation in the right subplot confirms that the proposed extended efficient search policy can also be applied in UORA networks with the stochastic arrival model.

% We finally compare the AAoI performance of UORA networks optimized by the proposed method with that of networks adopting a centralized OFDMA strategy.
We finally compare the AAoI performance of UORA networks optimized by the proposed method with that of networks adopting centralized OFDMA strategies.
It is important to note that centralized OFDMA suffers from additional overhead compared to UORA. Specifically, centralized OFDMA requires extra overhead to schedule STAs and to acquire local information of STAs, and it necessitates additional overhead for re-allocating indexes of the STAs when there is frequent entry and exit from the network. 
% To address this, we consider a centralized strategy with low overhead, named the round-robin (RR) policy, which schedules $L$ STAs in each slot in a circular order. 
% In the simulation, we assume that STAs remain within the network, a scenario more favorable to the RR policy since it avoids additional overhead.
To address this, we consider two centralized strategies with low overhead, named the round-robin (RR) policy, which schedules $L$ STAs in each slot in a circular order, and the max-AoI (MA) policy, which always schedules STAs with the largest $L$ AoI values.
In the simulation, we assume that STAs remain within the network, a scenario more favorable to the RR and MA policies since it avoids additional overhead.
The AAoI performances of the optimized UORA and the two centralized strategies against varying packet arrival rates $\lambda$ are simulated under two sets of parameters, $N=30$, $L=3$ and $N=100$, $L=5$.
The simulation results are shown in Fig. \ref{UORAvsRRvsMA}, indicating that UORA moderately outperforms RR when the status update arrival rate at STAs is low. The rationale is that the subchannel allocation in the RR policy is static, leading to subchannels being occupied even when their STAs have no packets to transmit. Consequently, active STAs under the RR policy can experience longer waiting times before transmissions compared to those in the UORA network.
We can also see from Fig. \ref{UORAvsRRvsMA}, with infrequent status update arrivals, UORA is superior to the MA policy.
This is because the AP employing the MA policy might waste considerable actions on repeatedly scheduling idle STAs with the largest AoI when the status updates arrive at the STA with a low~rate.
\begin{figure}[t]
	\centering    \includegraphics[width=0.45\textwidth]{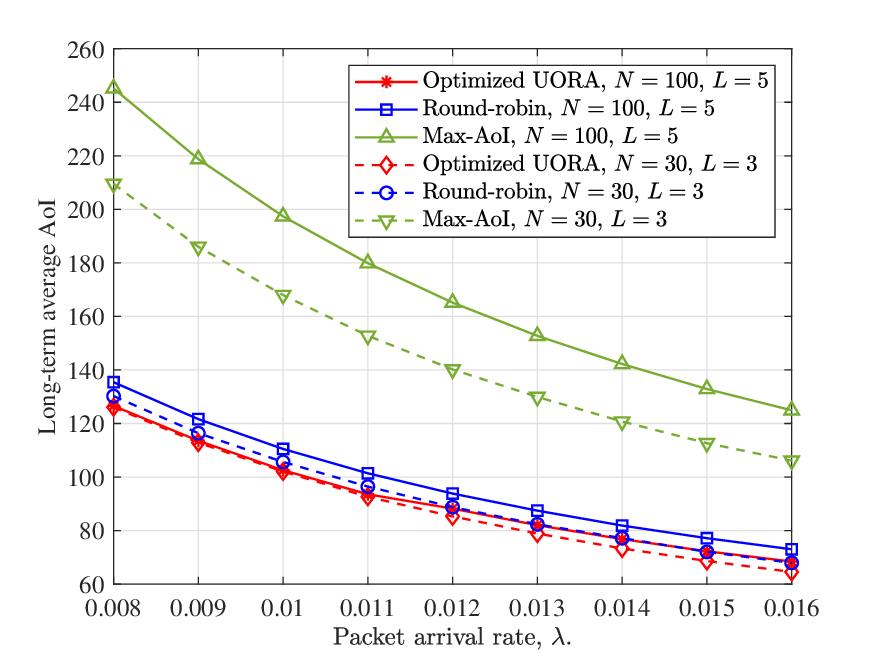}
	\caption{Comparisons of the AAoI performance between UORA, the round-robin policy, and the MA policy.}
	\label{UORAvsRRvsMA}
 \vspace{-1em}
\end{figure}

\section{Conclusion}
In this paper, we made the initial effort to analyze and optimize the age of information (AoI) performance of the uplink orthogonal frequency division multiple access-based random access (UORA) mechanism introduced in the latest WiFi standard.
By developing an analytical framework to characterize the transmission states of stations in UORA networks, we managed to evaluate their long-term average AoI (AAoI) using numerical methods. 
This allows us to optimize the AoI performance by tuning the UORA parameters through the exhaustive search.
We further derived closed-form expressions for the AAoI and its approximated lower bound with a fixed contention window and generate-at-will status updates.
In light of this, we proposed efficient parameter optimization algorithms, which can be implemented with only a few iterations of searching.
Simulation results corroborated our analyses and illustrated that the proposed algorithms offered the AAoI comparable to the optimal AoI performance achieved by exhaustive search. 
Moreover, the AAoI of the optimized UORA scheme outperforms that of the round-robin and max-AoI policies in large and low-traffic networks.
% \newpage
%\appendices
 \section*{APPENDIX}

\subsection{Proof of Theorem \ref{T0}}\label{appA0}
By Bayes theorem and the law of total probability, we have
    \begin{equation}\label{STY}
        q=\Pr\{ST|Y\}=\sum_{a=0}^{N-1}\Pr\{Y_a|Y\}\Pr\{ST|Y,Y_a\},
    \end{equation}
    where $Y$ indicates a certain STA is accessing, $Y_a$ represents that there are $a$ other active STAs in the BSS, $ST$ indicates the accessing STA has a successful transmission, and $\Pr\{Y_a|Y\}=\Pr\{Y_a,Y\}/\sum_{i=0}^{N-1}\Pr\{Y_i,Y\}$.
    Due to the symmetry of the network, the probability that certain $i'$ STAs are active is $\mu_{i'}/\binom{N}{i'}$, and thus
    \begin{equation}\label{YaY}
        \Pr\{Y_a,Y\}=\frac{\rho\mu_{a+1}\binom{N-1}{a}}{\binom{N}{a+1}}=\frac{\rho(a+1)\mu_{a+1}}{N}.
    \end{equation}
    Furthermore, $\Pr\{ST|Y,Y_a\}$ is the probability that other accessing STAs do not occupy the same RU as the certain STA, i.e.,
    \begin{equation}\label{STYYa}
        \Pr\{ST|Y,Y_a\}=\sum_{b=0}^aC_a^b(\rho)(1-\frac{1}{L})^b.
    \end{equation}
    Substituting \eqref{YaY} and \eqref{STYYa} into \eqref{STY} yields \eqref{TSR}.

\subsection{Proof of Theorem \ref{T1}}\label{appA}
Recall that $U_x$ denotes a variable that represents the consecutive time slots elapsed from the initial entry of the OBO counter of an STA into level $x$ until the subsequent occurrence of the OBO counter reaching $0$ for the first time. 

    Once the OBO counter of an STA enters level $m$, it takes $U_m$ slots for the OBO counter to reach $0$.
    Subsequently, we have two possible cases: \textit{1)} the STA transmits a packet successfully with probability $q$; \textit{2)} the transmission fails with probability $1-q$ and then it takes the STA $R_m$ slots to successfully transmit a packet recursively.
    In this sense, we have
    \begin{equation}\label{ERm}
        \mathbb{E}[R_m]=q\mathbb{E}[U_m]+(1-q)\mathbb{E}[U_m+R_m],
    \end{equation}
    \begin{equation}\label{ERmSq}
        \mathbb{E}[R^2_m]=q\mathbb{E}[U^2_m]+(1-q)\mathbb{E}[(U_m+R_m)^2].
    \end{equation}
    Next, we focus on $U_x$, $x\in\{0,1,\cdots,m-1\}$, which, according to Fig. \ref{MC}, is associated with  the following countdown cases:
    \begin{itemize}
        \item With probability $\frac{L+1}{W_x}$, the count is initialized to state $\left\langle x,0 \right\rangle$, and then $U_x=1$.
        \item With probability $\frac{L}{W_x}$, the count is initialized to one of states $\left\langle x,(\alpha-1)L+1 \right\rangle,\left\langle x,(\alpha-1)L+2 \right\rangle,\cdots,\left\langle x,\alpha L \right\rangle$ for $\alpha\in\{0,1,\cdots,\alpha_x-1\}$. Then, $U_x=\alpha+1$.
        \item With probability $\frac{\beta_x}{W_x}$, the count is initialized to one of states $\left\langle x,(\alpha_x-1)L+1 \right\rangle,\left\langle x,(\alpha_x-1)L+2 \right\rangle,\cdots,$ and\\
        $\left\langle x, (\alpha_x-1)L+\beta_x \right\rangle$.
        Then, $U_x=\alpha_x+1$.
    \end{itemize}
    Considering all three above cases, we can achieve
    \begin{equation}\label{EUm}
    \begin{split}
        \mathbb{E}[U_x]&=\frac{L+1}{W_x}+\sum_{\alpha=2}^{\alpha_x}\frac{\alpha L}{W_x}+(\alpha_x+1)\frac{\beta_x}{W_x}\\
        &=\frac{\alpha_x(\alpha_x+1)}{2}\frac{L}{W_x}+\frac{(\alpha_x+1)\beta_x+1}{W_x},
    \end{split}
    \end{equation}
    \begin{equation}\label{EUmSq}
    \begin{split}
        \mathbb{E}[U_x^2]&=\frac{L+1}{W_x}+\sum_{\alpha=2}^{\alpha_x}\frac{\alpha^2 L}{W_x}+(\alpha_x+1)^2\frac{\beta_x}{W_x}\\
        &=\frac{\alpha_x(\alpha_x+1)(2\alpha_x+1)}{6}\frac{L}{W_x}+\frac{(\alpha_x+1)^2\beta_x+1}{W_x}.
    \end{split}
    \end{equation}
    Substituting \eqref{EUm} into \eqref{ERm}, letting $x=m$, and after some manipulation, we can obtain \eqref{ERmCF}.

    Further, upon the STx's OBO counter entering level $x$ for $0\le x<m$, it requires $U_x$ slots for the counter to reach $0$ and transmit.
    The transmission also succeeds with probability $q$, and fails with probability $1-q$.
    However, if the transmission fails, the next successful transmission would occur after $R_{x+1}$ slots, thereby yielding
    \begin{equation}\label{ERa}
        \mathbb{E}[R_x]=q\mathbb{E}[U_x]+(1-q)\mathbb{E}[U_x+R_{x+1}],
    \end{equation}
    \begin{equation}\label{ERaSq}
        \mathbb{E}[R_x^2]=q\mathbb{E}[U_x^2]+(1-q)\mathbb{E}[(U_x+R_{x+1})^2].
    \end{equation}
    Based on the three countdown cases and the recursiveness of $\mathcal{M}_2$, we can present
    \begin{equation}\label{EU+RSq}
        \begin{split}
            &\mathbb{E}[(U_x+R_{x'})^2]=\mathbb{E}[(1+R_{x'})^2]\frac{L+1}{W_x}\\
            &+\sum_{\alpha=2}^{\alpha_x}\mathbb{E}[(\alpha+R_{x'})^2]\frac{L}{W_x}+\mathbb{E}[(\alpha_x+1+R_{x'})^2]\frac{\beta_x}{W_x}\\
            &=\mathbb{E}[U_x^2]+2\mathbb{E}[U_x]\mathbb{E}[R_{x'}]+\mathbb{E}[R^2_{x'}],
        \end{split}
    \end{equation}
    where $x'=\min(x+1,m)$.
    Substituting \eqref{EUm} into \eqref{ERa}, together with some manipulations, yields \eqref{ERaCF}; similarly, substituting \eqref{EUmSq} and \eqref{EU+RSq} into \eqref{ERmSq} provides \eqref{ERmSqCF}, and substituting \eqref{EUmSq} and \eqref{EU+RSq} into \eqref{ERaSq} results in~\eqref{ERaSqCF}.

%\subsection{Proof of Corollary \ref{AoINoChange}}\label{appC1}
   
\subsection{Proof of Theorem \ref{TUincrease}}\label{appC2}
We calculate the partial derivatives of $\mathbb{E}[U_x]$ given in Theorem \ref{TUincrease} and have
\begin{equation}\label{pd1}
\begin{split}
    \frac{\partial \mathbb{E}[U_x]}{\partial \alpha_x}&\!=\!\dfrac{L^2\alpha_x^2\!+\!\left(2L\beta_x+2L\right)\alpha_x+2\beta_x^2+\left(2-L\right)\beta_x\!-\!L}{2\left(L\alpha_x+\beta_x+1\right)^2}\\
    &>0,
\end{split}
\end{equation} 
for $\alpha_x\ge 1,\beta_x\ge 0$, and
\begin{equation}\label{pd2}
    \frac{\partial \mathbb{E}[U_x]}{\partial \beta_x}=\dfrac{\alpha_x\left(L\alpha_x+L+2\right)}{2\left(\beta_x+L\alpha_x+1\right)^2}\ge 0
\end{equation}
for $\alpha_x\ge 0,\beta_x\ge 0$.
% For $x'>x$ and $\alpha_x\ge 1$, i.e., $W_1>L+1$, we have $W_{x'}>W_x$, implying $\alpha_{x'}>\alpha_x$.
For $x'>x$ and $\alpha_x\ge 1$, i.e., $W_x>L+1$, we have $W_{x'}>W_x$, implying $\alpha_{x'}>\alpha_x$.
Since $\alpha_x$ and $\beta_x$ are the quotient and the remainder of $(W_x-1)/L$, we have
\begin{equation}
\begin{split}
    \mathbb{E}[U_x]&\!=\!\mathbb{E}[U](\alpha_x,\beta_x)\!\overset{(a)}{\le}\!\mathbb{E}[U](\alpha_{x'-1},\beta_x)\!\overset{(b)}{<}\!\mathbb{E}[U](\alpha_{x'-1},L)\\
    &=\mathbb{E}[U](\alpha_{x'},0)\overset{(b)}{<}\mathbb{E}[U](\alpha_{x'},\beta_{x'})=\mathbb{E}[U_{x'}],
\end{split}
\end{equation}
where inequality (a) is by \eqref{pd1} and inequalities (b) are by \eqref{pd2}.
On the other hand, if $W_x\le L+1$, the backoff state of the STA must transit to $\langle x,0 \rangle$ after the STA enters level $x$, implying that $\mathbb{E}[U_x]=1$.
This completes the proof.

\subsection{Proof of Theorem \ref{LB}}\label{appB}
Since $\lambda=1$, the buffer of each STA is always full.
    This indicates $V=0$.
    Furthermore, the evolution of the AoI of an STA can be given by 
    \begin{equation}
    \Delta(t+1)=
    \begin{cases}
        1,& \text{if a status update of the STA is}\\
        & \text{received by the AP in slot $t$,}\\
        \Delta(t)+1,& \text{otherwise,}
    \end{cases}
    \end{equation}
which implies that $S=1$, consistent with substituting $\lambda=1$ into \eqref{approS} under the assumption given in Subsection \ref{EAoIAnaylsis}.
Substituting $V=0$ and $S=1$ into \eqref{AoIUE} presents
\begin{equation}\label{AoISS}
\begin{split}
    \overline{\Delta}&=\frac{\mathbb{E}[K^2]}{2\mathbb{E}[K]}+\frac{1}{2}=\frac{\mathbb{E}[U_0^2]+2(1-q)\mathbb{E}[U_0]\mathbb{E}[R_0]}{2\mathbb{E}[U_0]}+\frac{1}{2}\\
    &\overset{(a)}{=}\frac{\mathbb{E}[U_0^2]}{2\mathbb{E}[U_0]}+\frac{1-q}{q}\mathbb{E}[U_0]+\frac{1}{2}\\
    &\overset{(b)}{\ge}\left(\frac{1}{q}-\frac{1}{2}\right)\mathbb{E}[U_0]+\frac{1}{2},
\end{split}
\end{equation}
where equality $(a)$ is by \eqref{ERm} and inequality $(b)$ is because the fact that the variance of any random number $X$ follows ${\rm Var}[X]=\mathbb{E}[X^2]-\mathbb{E}[X]^2\ge 0$. 

Owing to $\lambda=1$, all STAs in the BSS are active, having
\begin{equation}\label{qSS}
\begin{split}
    q&=\Pr\{\text{An RU is only selected by the STA}\}\\
    &=L\times\frac{1}{L}\left(1-\frac{\rho}{L}\right)^{N-1}=\left(1-\frac{\rho}{L}\right)^{N-1}.
\end{split}
\end{equation}
Recall that for $\lambda=1$, $m=0$ we have
\begin{equation}\label{rhoSS}
    \rho=\frac{W_0}{H_0+W_0}=\frac{\alpha_0L+\beta_0+1}{\frac{L}{2}\alpha_0^2+(\beta_0+\frac{L}{2})\alpha_0+\beta_0+1}=\frac{1}{\mathbb{E}[U_0]}.
\end{equation}
Substituting \eqref{qSS} and \eqref{rhoSS} into \eqref{AoISS}, we obtain \eqref{AoISSo} and \eqref{AoILBSSo}.

% Recalling \eqref{AoISS}, we focus on term $\mathbb{E}[K^2]/\mathbb{E}[K]$.
% Since $U_0\ge 1$, we can conclude that $\mathbb{E}[U_0^2]=\sum_{u=1}^{W_0-1}\Pr\{U_0=u\}u^2$ and $\mathbb{E}[U_0]^2=(\sum_{u=1}^{W_0-1}\Pr\{U_0=u\}u)^2$ are strictly positively correlated to $\mathbb{E}[U_0]=\sum_{u=1}^{W_0-1}\Pr\{U_0=u\}u$, indicating that $\mathbb{E}[U_0^2]$ and $\mathbb{E}[U_0]^2$ are strictly positively correlated.
% Based on this, $\overline{\Delta}$ and $\overline{\Delta}_{LB}$ are positively correlated by equality $(a)$.

\subsection{Proof of Theorem \ref{LBroot}}\label{appC}
Deriving $\tilde{\Delta}_{LB}(W)$, we have
    \begin{equation}
    \begin{split}
        &\frac{\mathrm{d}\tilde{\Delta}_{LB}(W)}{\mathrm{d}W}
        =\left(\dfrac{W^2-L-1}{2LW^2}\right)\\
        &\times\left(-\dfrac{\left(N-1\right)\exp{\left(\frac{N-1}{LU(W)}\right)}}{LU(W)}+\exp{\left(\frac{N-1}{LU(W)}\right)}-\dfrac{1}{2}\right).
    \end{split}      
    \end{equation}

To obtain the real positive roots of $\tilde{\Delta}_{LB}(W)$, we need to find the positive roots of the following functions:
\begin{equation}\label{Wre1}
    f_1(W)=\dfrac{W^2-L-1}{2LW^2},
\end{equation}
%and
\begin{equation}\label{Wre2}
    f_2(W)=-\dfrac{\left(N-1\right)\exp{\left(\frac{N-1}{LU(W)}\right)}}{LU(W)}+\exp{\left(\frac{N-1}{LU(W)}\right)}-\dfrac{1}{2}.
\end{equation}
Clearly, $f_1(W)$ has two roots, given by $W=-\sqrt{L+1}<0$ and $\sqrt{L+1}>0$.
Thus, $\sqrt{L+1}$ is one of the real positive root of $\tilde{\Delta}_{LB}(W)$. 

After some manipulation on $f_2(W)=0$, we have
\begin{equation}
    \left(\dfrac{N-1}{LU(W)}-1\right)\exp{\left(\frac{N-1}{LU(W)}-1\right)}=-\frac{1}{2\mathrm{e}},
\end{equation}
leading to
\begin{equation}\label{Ure}
\begin{split}
    &\frac{N-1}{LU(W)}-1=\mathcal{W}\left(-\frac{1}{2\mathrm{e}}\right)\\
    &\Rightarrow U(W)=\frac{W^2+(L-2)W+L+1}{2WL}=\frac{N-1}{L(\mathcal{W}(-\frac{1}{2\mathrm{e}})+1)}.
\end{split}
\end{equation}

To find the real positive roots of $f_2(W)$, we focus on $U(W)$ in $W\in(0,+\infty)$.
According to the properties of the quadratic equation of one variable, the term $W^2+(L-2)W+L+1$ takes the minimum value at $W=-\frac{L-2}{2}$ and monotonically increases in $[-\frac{L-2}{2},+\infty)$.
We then consider the following two cases: 
\textit{1)} $L=1$ makes the term take the minimum value $\frac{7}{4}$;
\textit{2)} $L\ge 2$ makes the term monotonically increases in $[0,+\infty)$ and equal $L+1$ when $W=0$.
These two facts imply that $U(W)>0$ when $W>0$. 

$\mathcal{W}\left(-\frac{1}{2\mathrm{e}}\right)$ has two real values, given by $\mathcal{W}_0\left(-\frac{1}{2\mathrm{e}}\right)\approx -0.232$ and $\mathcal{W}_{-1}\left(-\frac{1}{2\mathrm{e}}\right)\approx -2.6783$, where $\mathcal{W}_{-1}(x)$ is the Lambert W function of branch $-1$.
Using $\mathcal{W}_{-1}\left(-\frac{1}{2\mathrm{e}}\right)$ yields $\frac{N-1}{L(\mathcal{W}\left(-\frac{1}{2\mathrm{e}}\right)+1)}<0$, indicating that $f_2(W)$ has no positive root.
Hence, we substitute $\mathcal{W}_{0}\left(-\frac{1}{2\mathrm{e}}\right)$ into \eqref{Ure} and then achieve the following quadratic equation:
\begin{equation}\label{Wre3}
    W^2+BW+L+1=0.
\end{equation}
According to the solving rules of quadratic equations, \eqref{Wre3} has no real solution when $B^2-4(L+1)<0$.
Moreover, since $\mathcal{W}_0(-\frac{1}{\mathrm{2e}})$ is a transcendental number, we cannot have $B^2=4(L+1)$, meaning that \eqref{Wre3} cannot have two equal solutions.
Further, \eqref{Wre3} has two real solutions, 
\begin{equation}
    W_1^*=\frac{-B-\sqrt{B^2-4(L+1)}}{2}
\end{equation}
%and
\begin{equation}
    W_2^*=\frac{-B+\sqrt{B^2-4(L+1)}}{2}
\end{equation}
when condition \eqref{C1} is satisfied.
If $B>0$, both $W_1^*,W_2^*<0$ since  
\begin{equation}\label{IE}
    \sqrt{B^2-4(L+1)}<|B|.
\end{equation}
In this context, $\tilde{\Delta}_{LB}(W)$ has only one real positive root $r_2=\sqrt{L+1}$.
Similarly, both $W_1^*,W_2^*>0$ if $B<0$, which is condition \eqref{C2}, by \eqref{IE}.
As such, $r_1=W_1^*$, $r_3=W_2^*$, and $r_2$ are three real positive roots of $\tilde{\Delta}_{LB}(W)$ has only one local minimum point located at $r_2$ in $(0,+\infty)$.

On the other hand, we can present
\begin{equation}
\begin{split}
    \lim_{W\to 0^+}U(W)&=\lim_{W\to 0^+}\frac{W+L-2+\frac{L+1}{W}}{2L}=+\infty\\
    &=\lim_{W\to +\infty}U(W).
\end{split}
\end{equation}
By \eqref{AoIapproLBSS}, we can conclude 
\begin{equation}\label{AoIapproLB0+inty}
    \lim_{W\to 0^+}\tilde{\Delta}_{LB}(W)=\lim_{W\to +\infty}\tilde{\Delta}_{LB}(W)=+\infty.
\end{equation}
In light of this, when condition \eqref{C1}, \eqref{C2} hold, $\tilde{\Delta}_{LB}(W)$ has three stationary points located at the three positive roots, and we have two possible cases in $(0,+\infty)$: 
\begin{enumerate}
    \item\label{FC1} The smallest and largest roots are not extreme points and the remaining root is the local minimum point;
    \item\label{FC2} The smallest and largest roots are two local minimum points, and the remaining root is the local maximum point.
    Besides, the local maximum is larger than the two local minimums.
\end{enumerate} 
Clearly, $\tilde{\Delta}_{LB}(r_1)=\tilde{\Delta}_{LB}(r_3)$, illustrating that only $r_1<r_2<r_3$ can satisfy cases (\ref{FC1}) and (\ref{FC2}).
Furthermore, we have
\begin{equation}
    \lim_{W\to 0^+}f_1(W)=-\infty,\lim_{W\to +\infty}f_1(W)=\frac{1}{2L},
\end{equation}
indicating that $f_1(W)$ is negative and positive in $(0,r_2)$ and $(r_2,+\infty)$, respectively, and
\begin{equation}
    \lim_{W\to 0^+}f_2(W)=\lim_{W\to +\infty}f_2(W)=\frac{1}{2},
\end{equation}
representing that $f_2(W)$ is positive in $(0,r_1)$, $(r_3,+\infty)$ and negative in $(r_1,r_3)$.
Drawing upon this, case (\ref{FC1}) is not true.

Moreover, if condition \eqref{C1}, \eqref{C2} are not both satisfied, $\tilde{\Delta}_{LB}(W)$ has only one positive root $r_2$.
Then, by \eqref{AoIapproLB0+inty}, $\tilde{\Delta}_{LB}$ has only one local minimum point. This completes the proof.

% \begin{thebibliography}{1}
\bibliographystyle{IEEEtran}
\bibliography{Ref}
% \end{thebibliography}

\end{document}